\newif\ifnotes\notestrue
\newcommand{\dnote}[1]{{\bf (Daniel:} {#1}{\bf ) }}
\newcommand{\snote}[1]{{\bf (Santosh:} {#1}{\bf ) }}
\newcommand{\dnote}[1]{}
\newcommand{\snote}[1]{}
\theoremstyle{plain}            
\newtheorem{theorem}{Theorem}[section]
\newtheorem{lemma}[theorem]{Lemma}
\newtheorem{corollary}[theorem]{Corollary}
\theoremstyle{definition}       
\theoremstyle{remark}           
\numberwithin{equation}{section}
\DeclareMathOperator{\poly}{poly}
\DeclareMathOperator*{\E}{E}
\newcommand{\R}{\ensuremath{\mathbb{R}}}
\newcommand{\pr}[2]{\langle{#1, #2}\rangle}
\newcommand{\abs}[1]{\lvert{#1}\rvert}
\newcommand{\set}[1]{\{{#1}\}}
\newcommand{\length}[1]{\lVert{#1}\rVert}
\def\tr{\mathrm{tr}}
\def\eps{\epsilon}
\def\vol{\mathrm{vol}}
\DeclareMathOperator*{\conv}{conv}
\title{Near-Optimal Deterministic Algorithms for Volume Computation and Lattice Problems via M-Ellipsoids}
\author{Daniel Dadush\thanks{School of Industrial and Systems Engineering, Georgia Tech. {\tt dndadush@gmail.com}}
\and
Santosh Vempala\thanks{School of Computer Science, Georgia Tech. {\tt vempala@gatech.edu}}}
\begin{document}
\date{April 2012}
\maketitle

\begin{abstract}
We give a deterministic $2^{O(n)}$ algorithm for computing an M-ellipsoid of a convex body, matching a known
lower bound. This has several interesting consequences including improved
deterministic algorithms for volume estimation of convex bodies and for the shortest and closest lattice vector problems under general norms.
\end{abstract}

\section{Introduction}

Ellipsoids have traditionally played an important role in the study of convex bodies. The classical Lowner-John ellipsoid, for instance, is the starting point for many interesting studies. To recall John's theorem, for any convex body $K$
in $\R^n$, there is an ellipsoid $E$ with centroid $x_0$ such that
\[
x_0 + E \subseteq K \subseteq x_0 + nE.
\]
In fact, this bound is achieved by the {\em maximum volume} ellipsoid contained in $K$.

Ellipsoids have also been critical to the design and analysis of efficient algorithms. The most notable example is the ellipsoid algorithm \cite{Shor77,YN76}
for linear \cite{Kh80} and convex optimization \cite{GLS}, which represents a frontier of polynomial-time solvability. For the basic
problems of sampling and integration in high dimension, the {\em inertial} ellipsoid defined by the covariance matrix of a distribution is an important
ingredient of efficient algorithms \cite{KLS97, LV2, Vempala10}. This ellipsoid also achieves the bounds of John's theorem for general convex bodies (for
centrally-symmetric convex bodies, the max-volume ellipsoid achieves the best possible sandwiching ratio of $\sqrt{n}$ while the inertial ellipsoid could still
have a ratio of $n$).

Another ellipsoid that has played a critical role in the development of modern convex geometry is the M-ellipsoid (Milman's ellipsoid). This object was
introduced by Milman as a tool to prove fundamental inequalities in convex geometry (see e.g., Chapter 7 of \cite{Pis89}). An M-ellipsoid $E$ of a convex body
$K$ has small {\em covering numbers} with respect to K. We let $N(A,B)$ denote the number of translations of $B$ required to cover $A$. Then, as shown by Milman,
every convex body $K$ has an ellipsoid $E$ for which $N(K,E)N(E,K)$ is bounded by $2^{O(n)}$. This is the best possible bound up to a constant in the exponent. In
contrast, the John ellipsoid can have this covering bound as high as $n^{\Omega(n)}$.  The existence of M-ellipsoids now has several proofs in the literature by
Milman \cite{M86}, multiple proofs by Pisier \cite{Pis89}, and most recently, by Klartag \cite{Klartag2006}.

The complexity of computing these ellipsoids is interesting for its own sake, but also due to several important consequences that we will discuss presently.
John ellipsoids are hard to compute, but their sandwiching bounds can be approximated deterministically to within $O(\sqrt{n})$ in polynomial time. Inertial
ellipsoids can be approximated to arbitrary accuracy by random sampling in polynomial time. Algorithms for M-ellipsoids have been considered only recently. The
proof of Klartag \cite{Klartag2006} gives a randomized polynomial-time algorithm \cite{DPV-SVP-11}. In \cite{DV12}, we give a deterministic $O(\log n)^n$ time
and $\poly(n)$-space algorithm to compute the $\ell$-ellipsoid of any convex body. The $\ell$-ellipsoid yields an approximation to the M-ellipsoid, where the
product of covering estimates is $O(\log n)^n$ instead of the best possible bound of $2^{O(n)}$. It has been open
to give a deterministic algorithm for constructing an M-ellipsoid that achieves optimal covering bounds. The extent to
which randomness is essential for efficiency is a very interesting question in general, and specifically for problems
on convex bodies where separations between randomized and deterministic complexity are known in the general oracle model \cite{BF87, DyerFK89}.
Here we address the question of deterministic M-ellipsoid construction
and consider its algorithmic consequences for volume estimation and for fundamental lattice problems, namely
the Shortest Vector Problem (SVP) and Closest Vector Problem (CVP).

The core new result of this paper is a deterministic $2^{O(n)}$ algorithm for computing an M-ellipsoid of a convex body in the oracle model \cite{GLS}.
Moreover, there is a $2^{\Omega(n)}$ lower bound for deterministic algorithms, so this is the best possible up to a constant in the exponent. We state this result formally, then proceed to its consequences and
a nearly matching lower bound.

\begin{theorem}\label{thm:det-M-ellipsoid}
There is a deterministic algorithm that, given any convex body $K \subset \R^n$  specified by a membership oracle, finds an ellipsoid $E$ such that $N(K,E)N(E,K) \le 2^{O(n)}$. The time complexity
of the algorithm (oracle calls and arithmetic operations) is $2^{O(n)}$ and its space complexity is polynomial in $n$.
\end{theorem}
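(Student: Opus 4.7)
The plan is to build on the deterministic $\ell$-ellipsoid algorithm of \cite{DV12}, which already gives a covering product of $(\log n)^n = 2^{n\log\log n}$, and then to shave the extra $2^{n\log\log n}$ factor down to $2^{O(n)}$ by a deterministic search that fits within our $2^{O(n)}$ time budget. Conceptually this parallels Klartag's existence proof, and the randomized algorithm of \cite{DPV-SVP-11} derived from it, which identifies an M-ellipsoid from the covariance matrix of the uniform distribution on $K$; our task is to replace random sampling by deterministic enumeration once $K$ is in a sufficiently well-controlled position.

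The first step is to invoke the $\ell$-ellipsoid algorithm from \cite{DV12} to obtain a preliminary ellipsoid $E_0$ with $N(K,E_0)\,N(E_0,K)\le (\log n)^n$, and to change coordinates so that $E_0$ becomes the unit ball $B$. In this position $K$ and $B$ are mutually coverable by $(\log n)^n$ translates, giving bounded sandwiching and allowing us to work inside a ball of radius $\polylog(n)$.

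The second step is to enumerate the points of a grid of size $2^{O(n)}$ inside this bounding ball, test each for membership in $K$ using the oracle, and compute the empirical covariance $\Sigma$ of the points that fall inside $K$. The candidate ellipsoid $E$ is the one defined by $\Sigma$, suitably rescaled. The analysis would show that this empirical covariance is a constant-factor approximation of the true covariance of the uniform distribution on $K$ in every direction, so that Klartag's theorem yields the covering product $N(K,E)\,N(E,K)\le 2^{O(n)}$. Both the enumeration and the covariance update can be streamed point by point, which keeps the space complexity polynomial.

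The main obstacle I anticipate is balancing grid resolution against the time budget: a uniform grid in the $\ell$-position can be simultaneously too coarse to control the thin directions of $K$ and too fine in the ambient ball, so the construction will probably need a non-uniform grid adapted to $E_0$, or an iterative refinement in which a coarse first-pass covariance is used to reposition $K$ before a second pass. A secondary difficulty is the deterministic verification that the output satisfies the covering bound; I expect to handle this via volumetric estimates of the form $N(A,B)\le 2^{O(n)}\vol(A+B)/\vol(B)$ combined with a deterministic volume subroutine, so that the final guarantee does not rest on any randomized check.
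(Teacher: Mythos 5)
The heart of your construction---take the empirical covariance of points of $K$ and invoke ``Klartag's theorem''---does not work as stated, and it is precisely the obstruction this paper is designed to get around. Klartag's theorem does not say that the inertial ellipsoid of $K$ itself is an M-ellipsoid; it says that a suitable perturbation $K'$ of $K$, obtained from an exponential reweighting $e^{\langle s,x\rangle}$ of the uniform measure on $K$ for a well-chosen tilt $s$, has bounded isotropic constant, and it is the covariance of \emph{that} reweighted measure which gives a $2^{O(n)}$ covering product. For $K$ itself, the inertial ellipsoid is only known to be an M-ellipsoid up to a factor $L_K^n$, where the best known bound on the isotropic constant is $L_K = O(n^{1/4})$, i.e.\ a possible loss of $n^{\Omega(n)}$ unless one assumes the slicing conjecture. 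The randomized algorithm of \cite{DPV-SVP-11} finds the tilt $s$ by sampling; nothing in your proposal replaces that step deterministically, and this is exactly the sense in which the covariance-based approach ``seems inherently difficult to derandomize.''

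The quantitative parts also fail. After the $\ell$-position of \cite{DV12}, the sandwiching of $K$ between balls has ratio polynomial in $n$, not $\polylog(n)$: covering numbers of order $(\log n)^n$ by themselves give no polylogarithmic sandwiching, since a direction in which $K$ is long forces proportionally many translates. Consequently a grid of total size $2^{O(n)}$ --- per-axis spacing a constant fraction of the outer radius --- will in general contain no grid points of $K$ along its thin directions, so its empirical covariance cannot be a constant-factor approximation of $\cov(K)$ in every direction; resolving an aspect ratio of $\poly(n)$ requires $\poly(n)^n = 2^{\Theta(n\log n)}$ grid points, outside the time budget. Finally, the proposed verification is circular: a deterministic volume subroutine accurate to $2^{O(n)}$ is Theorem \ref{thm:det-vol}, itself a corollary of the M-ellipsoid construction, and \cite{DV12} only yields $(\log n)^n$ accuracy, too weak to certify a $2^{O(n)}$ covering bound. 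The paper takes an entirely different route: it makes Milman's iteration algorithmic, computing at each step an approximate Lewis ($\ell$-) ellipsoid via the convex program (\ref{sdp}) with the $\pm 1$-average proxy $\tilde{\ell}_K$ (evaluable deterministically in $2^n$ time), forming $K_{i+1}=\conv\{K_i\cap r_{out}A_iB_2^n,\ r_{in}A_iB_2^n\}$, and using the Sudakov and dual Sudakov inequalities to show the volume drifts by only $e^{O(n/\log^{(i)}n)}$ per step while the Banach--Mazur distance to the ball collapses to $O(1)$ within $\log^* n$ iterations; if you want to salvage a covariance-based argument, you would first have to explain how to find Klartag's tilt deterministically and how to estimate a covariance to constant accuracy with only $2^{O(n)}$ oracle calls, neither of which your outline addresses.
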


The first consequence is for estimating the volume of a convex body.  This is an ancient problem that has lead to many insights in algorithmic techniques,
high-dimensional geometry and probability theory. One one hand, the problem can be solved for any convex body presented in the general membership oracle model
in randomized polynomial time to arbitrary accuracy \cite{DFK}.  On the other hand, the following lower bound (improving on \cite{E86}) shows that deterministic
algorithms cannot achieve such approximations.

\begin{theorem}\cite{BF87}\label{thm:vol-lb}
Suppose there is a deterministic algorithm that takes a convex body $K$ as input and outputs $A(K), B(K)$ such that $A(K) \le \vol(K) \le B(K)$ and makes at most $n^a$ calls to the membership oracle for $K$. Then there is some convex body $K$ for which
\[
\frac{B(K)}{A(K)} \le \left(\frac{c n}{a \log n}\right)^{n/2}
\]
where $c$ is an absolute constant.
\end{theorem}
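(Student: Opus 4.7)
The plan is a standard adversary argument against the deterministic algorithm. First I would simulate the algorithm on the Euclidean unit ball $\mathcal{B} \subset \R^n$. Determinism forces a fixed sequence of at most $t = n^a$ queries $q_1, \dots, q_t$ with predetermined yes/no answers; let $S = \{q_i : q_i \in \mathcal{B}\}$, so $|S| \le t$. Next I would define the adversary body $K^\star = \conv(S \cup \{0\})$. Since $K^\star \subseteq \mathcal{B}$, every ``no'' query $q_i \notin \mathcal{B}$ also lies outside $K^\star$, while every ``yes'' query is in $K^\star$ by construction. The membership oracles for $\mathcal{B}$ and for $K^\star$ therefore return identical answers on $q_1, \dots, q_t$, so the deterministic algorithm outputs the same pair $(A, B)$ on both inputs. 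Applying the approximation guarantee $A \le \vol(\cdot) \le B$ to each body yields
\[
\frac{B}{A} \;\ge\; \frac{\vol(\mathcal{B})}{\vol(K^\star)}.
\]

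The theorem thus reduces to the classical convex-geometry estimate that the convex hull of any $t$ points in $\mathcal{B}$ has volume at most $(C \log t / n)^{n/2}$ times $\vol(\mathcal{B})$ for an absolute constant $C$, valid in the regime $n \le t \le 2^n$. Plugging in $t = n^a$ gives $\vol(K^\star)/\vol(\mathcal{B}) \le (Ca \log n / n)^{n/2}$, whose inverse matches the stated bound $(cn/(a \log n))^{n/2}$. A standard proof of this convex-hull estimate proceeds via spherical caps: one fixes an $\epsilon$-net of directions on the sphere, observes that for any direction $u$ in the net the support functional $\max_{x \in S} \pr{x}{u}$ can exceed $1 - h$ for at most a handful of directions per point of $S$, and then invokes the cap-volume bound $\vol(\text{cap of depth } h) / \vol(\mathcal{B}) = e^{-\Omega(n h^2)}$; balancing $h$ against $\log t$ yields the stated exponent.

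The main technical obstacle is precisely this convex-hull / cap-volume estimate, which is where the $\log n$ factor in the denominator comes from; the adversary simulation around it is essentially free. A minor detail is that including $0$ in $S$ ensures $K^\star$ is full-dimensional, but this does not affect the volume bound, which holds uniformly for any $t+1$ points in $\mathcal{B}$. If one also wants the counterexample to be centrally symmetric (to handle algorithms designed for such bodies), one instead takes $K^\star = \conv(S \cup -S)$, at the cost of doubling $t$, which only affects the hidden constant.
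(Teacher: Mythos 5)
The paper itself gives no proof of this statement --- it is quoted verbatim from B\'ar\'any--F\"uredi \cite{BF87} --- so the only meaningful comparison is with that source, and your adversary reduction is exactly the standard one: simulate the deterministic algorithm on $B_2^n$, collect the ``yes'' queries $S$, replace the ball by $K^\star \subseteq B_2^n$ containing $S$, observe that the transcripts (hence the output pair $(A,B)$) coincide, and conclude $B/A \ge \vol(B_2^n)/\vol(K^\star)$. Two small repairs: the displayed inequality in the theorem as printed is a typo (it must read $\ge$, which is the direction you prove and the direction used in the remark that a $2^{O(n)}$ approximation needs $2^{\Omega(n)}$ queries), and adjoining only the origin does not make $K^\star$ full-dimensional; take $K^\star=\conv(S\cup \rho B_2^n)$ for a small $\rho>0$, which also keeps the adversary body consistent with the sandwiching guarantees of the oracle model and perturbs the volume bound negligibly.

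The genuine gap is in your sketch of the key lemma, which is the entire content of \cite{BF87} and so cannot simply be quoted: that $\vol(\conv(S))\le (C\log t/n)^{n/2}\vol(B_2^n)$ for any $t$ points of $B_2^n$. The cap bookkeeping as you describe it --- show that for most directions $u$ the support value $h_{K^\star}(u)=\max_{x\in S}\pr{x}{u}$ stays below $1-h$ and charge a boundary cap of depth $h$ --- cannot reach this bound. Even if one knew $h_{K^\star}(u)\le 1-h$ for \emph{every} direction, the conclusion would only be $K^\star\subseteq (1-h)B_2^n$, a volume ratio $(1-h)^n = 2^{-O(n)}$ for constant $h$; and in the exceptional directions whose caps do contain points of $S$ there is no control at all. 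So this route tops out at an Elekes-type $2^{O(n)}$ separation, whereas the theorem needs an $n^{\Omega(n)}$ one (which is precisely what makes it compatible with the paper's $2^{O(n)}$ algorithm). The correct scale is $s\asymp\sqrt{\log t/n}$ measured from the center, as your quoted cap measure $e^{-\Omega(n s^2)}$ in fact suggests: for each $x_i\in B_2^n$ one has $\Pr_u[\pr{x_i}{u}>s]\le e^{-ns^2/2}$, so a union bound gives $\E_u \max_{x\in S}\pr{x}{u}\le c\sqrt{\log t/n}$, i.e.\ the mean width of $K^\star$ is $O(\sqrt{\log t/n})$. But a small average support function does not bound volume by itself; you need Urysohn's inequality $\left(\vol(K^\star)/\vol(B_2^n)\right)^{1/n}\le \int_{S^{n-1}} h_{K^\star}(u)\, d\sigma(u)$ (or an equivalent Gaussian mean-width statement, as in Carl--Pajor) to convert it into $\vol(K^\star)\le (c\log t/n)^{n/2}\vol(B_2^n)$; then $t=n^a$ gives the stated $(cn/(a\log n))^{n/2}$. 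With that ingredient inserted, your outline becomes the standard complete proof.
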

In particular, this implies that even achieving a $2^{O(n)}$ approximation requires $2^{\Omega(n)}$ oracle calls. Now the volume of an M-ellipsoid $E$ of $K$ is
clearly within a factor of $2^{O(n)}$ of the volume of $K$, thus Theorem \ref{thm:det-M-ellipsoid} gives a $2^{O(n)}$ algorithm that achieves this
approximation. And, as claimed, we have a lower bound of $2^{\Omega(n)}$ for computing an M-ellipsoid deterministically. We state this corollary formally.

\begin{theorem}\label{thm:det-vol}
There is a deterministic algorithm of time complexity (oracle calls and arithmetic operations) $2^{O(n)}$ and polynomial
space complexity that estimates the volume of a convex body given by a membership oracle to within a factor of $2^{O(n)}$.
\end{theorem}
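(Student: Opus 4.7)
The plan is to apply Theorem~\ref{thm:det-M-ellipsoid} as a black box and convert its covering-number guarantee directly into a volume sandwiching estimate. First, I would run the deterministic M-ellipsoid algorithm on the input body $K$, obtaining in time $2^{O(n)}$ and polynomial space an explicit description of an ellipsoid $E$ (as a positive-definite matrix together with a center) satisfying $N(K,E)\,N(E,K) \le 2^{\alpha n}$ for some absolute constant $\alpha$. Since covering numbers are at least $1$, the product bound forces each of $N(K,E)$ and $N(E,K)$ to be individually at most $2^{\alpha n}$.

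Next, I would compute $\vol(E)$ exactly from its defining matrix; this is a routine polynomial-time calculation (a determinant together with the closed form for the volume of the unit Euclidean ball). The bridge from covering numbers to volumes is the elementary inequality that if a set $A$ can be covered by $N$ translates of $B$, then $\vol(A) \le N\cdot \vol(B)$. Applying this in both directions to $K$ and $E$ yields
\[
\frac{\vol(E)}{2^{\alpha n}} \;\le\; \frac{\vol(E)}{N(E,K)} \;\le\; \vol(K) \;\le\; N(K,E)\cdot \vol(E) \;\le\; 2^{\alpha n}\,\vol(E).
\]

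The algorithm then outputs $A(K) = 2^{-\alpha n}\vol(E)$ and $B(K) = 2^{\alpha n}\vol(E)$; these satisfy $A(K) \le \vol(K) \le B(K)$ with $B(K)/A(K) = 2^{2\alpha n} = 2^{O(n)}$, which is the claimed sandwiching ratio. The running time is dominated by the invocation of Theorem~\ref{thm:det-M-ellipsoid}, hence $2^{O(n)}$, and the additional space usage is polynomial. There is no substantive obstacle once Theorem~\ref{thm:det-M-ellipsoid} is in hand: the entire content of Theorem~\ref{thm:det-vol} lies in the deterministic M-ellipsoid construction, with the volume of an ellipsoid and the covering-to-volume inequality being standard. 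The only care needed is to ensure the numerical representation of $E$ produced by the M-ellipsoid algorithm has enough precision that $\vol(E)$ is computable to within, say, a constant factor; this is inherited from the precision guarantees in Theorem~\ref{thm:det-M-ellipsoid} and is absorbed into the $2^{O(n)}$ slack.
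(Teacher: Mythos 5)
Your proposal is correct and follows the same route as the paper, which treats Theorem~\ref{thm:det-vol} as an immediate corollary of Theorem~\ref{thm:det-M-ellipsoid}: the covering bounds $N(K,E)$ and $N(E,K)$ sandwich $\vol(K)$ between $\vol(E)/2^{O(n)}$ and $2^{O(n)}\vol(E)$, and the ellipsoid's volume is computed directly from its matrix. The paper gives exactly this one-line argument (``the volume of an M-ellipsoid $E$ of $K$ is clearly within a factor of $2^{O(n)}$ of the volume of $K$''), so no further comparison is needed.
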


A natural question is whether this can be generalized to a trade-off between approximation
and complexity. Indeed the following result of Barany and Furedi \cite{BF88} gives a lower bound.

\begin{theorem}\cite{BF88}\label{thm:vol-tradeoff}
For any $0 \le \eps \le 1$, any deterministic algorithm that estimates the volume of any input convex body to within a
$(1+\eps)^n$ given only a membership oracle to the body, must make at least $\Omega(1/\eps)^n$ queries to the membership oracle.
\end{theorem}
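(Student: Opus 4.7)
The strategy is an adversary argument in the spirit of Bárány-Füredi. The plan is to construct a family $\{K_T\}_{T \subseteq [M]}$ of convex bodies indexed by subsets of $[M]$, with $M = \Theta(1/\eps)^n$, such that distinguishing $K_T$ from $K_{T'}$ requires a query inside one of $M$ pairwise-disjoint ``spike'' regions, while the volume ratio over the family spans a factor of at least $(1+\eps)^n$.

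For the construction, I would fix $M$ unit vectors $v_1, \ldots, v_M \in S^{n-1}$ with pairwise angular separation at least $2\theta$, and for each $T \subseteq [M]$ set
\[
K_T = \conv\bigl(B \cup \{(1+h)v_i : i \in T\}\bigr),
\]
where $B$ is the unit Euclidean ball and $h > 0$ is a height parameter. The angular separation guarantees that the spike regions $R_i = \{x \in B^c : \langle x, v_i\rangle \ge \|x\|\cos\theta\}$ are pairwise disjoint, so each query lies in at most one $R_i$, and only queries in $R_i$ can distinguish $K_T$ from $K_{T \triangle \{i\}}$.

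For the adversary strategy, answer YES iff $q \in B$ and NO otherwise. Each query in a spike region $R_{i(q)}$ thereby commits the realized $T$ to exclude $i(q)$. After $N$ queries the set $I$ of pinned indices satisfies $|I| \le N$, and both $K_\emptyset$ and $K_{[M]\setminus I}$ are consistent with every answer. Their volume ratio is $1 + (M-|I|)\, v_{\mathrm{spike}}/\vol(B)$, where $v_{\mathrm{spike}}$ is the volume of a single spike. Tuning $\theta$ and $h$ so that $M\, v_{\mathrm{spike}} \ge 2(1+\eps)^n \vol(B)$ forces the ratio to exceed $(1+\eps)^n$ whenever $N \le M/2$, yielding the desired lower bound $N \ge M/2 = \Omega(1/\eps)^n$.

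The main obstacle is balancing the three parameters $M$, $\theta$, and $h$. A naive separation $\theta \sim \eps$ supports only $M \sim (1/\eps)^{n-1}$ spikes, one exponent short of the target, so one is forced to the finer packing scale $\theta \sim \eps^{n/(n-1)}$. Moreover, the convex hull of $B$ with an apex $(1+h)v_i$ produces a spike whose base cap on $S^{n-1}$ has angular radius only $\sim \sqrt{h}$, so narrow spikes (necessary for a tight packing) are also short and the single-spike volume $v_{\mathrm{spike}}$ shrinks quickly. Reaching the cumulative bound $M\, v_{\mathrm{spike}} \ge (1+\eps)^n \vol(B)$ under these joint constraints is the technical crux and, as in Bárány-Füredi, calls for a more intricate construction (for example, replacing the ball base with a thinner inscribed body) that decouples the ambient volume scale $\vol(B) \approx (c/n)^{n/2}$ from the sphere-packing scale $\theta^{n-1}$.
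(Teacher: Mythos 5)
You should note first that the paper does not prove this statement at all---it is quoted from B\'ar\'any--F\"uredi \cite{BF88}---so your argument has to stand on its own, and as written it does not: it is a plan whose decisive quantitative step is missing, as you yourself acknowledge. Moreover the gap is not a routine tuning issue; for the specific construction you propose (spikes over the Euclidean ball) the parameters provably cannot be balanced. If the apex is $(1+h)v_i$, the region where $\conv(B_2^n \cup \{(1+h)v_i\})$ differs from $B_2^n$ is the tangent-cone cap of angular radius roughly $\sqrt{2h}$; the spike width is not a free parameter $\theta$ but is forced to be $\approx \sqrt{h}$. Disjointness then caps the number of spikes at $M \lesssim c^n h^{-(n-1)/2}$, while each spike adds volume at most about $c^n h^{(n+1)/2}\vol(B_2^n)$, and in any case all consistent bodies lie in $(1+h)B_2^n$, so the ratio between the two extreme consistent bodies is at most $(1+h)^n$. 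To defeat an estimator with accuracy $(1+\eps)^n$ this ratio must exceed $(1+\eps)^{2n}$, forcing $h = \Omega(\eps)$, and hence $M \leq c^n \eps^{-(n-1)/2}$: the ball-plus-spikes family can certify a query lower bound of order at most $\eps^{-(n-1)/2}$, short of $\Omega(1/\eps)^n$ by essentially a square root in the exponent. Your proposed finer packing scale $\theta \sim \eps^{n/(n-1)}$ is incompatible with the geometric constraint $\theta \approx \sqrt{h} \gtrsim \sqrt{\eps}$, so it cannot rescue the construction as stated.

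The adversary framework itself (answer queries consistently with a reference body, observe that each query pins at most one perturbation index, then exhibit two consistent bodies with volume ratio larger than $(1+\eps)^{2n}$) is sound and is indeed in the spirit of \cite{BF88}; but the entire content of the theorem lives in producing a family in which $\Omega(1/\eps)^n$ pairwise-disjoint local perturbations each carry enough volume, and that is exactly the part your outline defers. The vague fix of ``replacing the ball base with a thinner inscribed body'' does not obviously help, since shrinking the base shrinks the spikes through the same tangency mechanism; what is required is the genuinely different construction and the sharp polytope/cap volume estimates carried out in \cite{BF88}. As it stands, your proposal is an unproved outline rather than a proof of the stated bound.
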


We show that the M-ellipsoid algorithm can be extended to give an algorithm that essentially matches this best possible complexity
for centrally symmetric convex bodies.

\begin{theorem}\label{thm:optimal-vol}
For any $0 \le \eps \le 1$, there is a deterministic algorithm that computes a $(1+\eps)^n$ approximation of the
volume of a given centrally symmetric convex body in $O(1/\eps)^n$ time and polynomial space.
\end{theorem}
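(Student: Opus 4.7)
The plan is to extend the $2^{O(n)}$-time M-ellipsoid algorithm from Theorem~\ref{thm:det-M-ellipsoid} by a finer $\eps$-scale enumeration. First, I would run Theorem~\ref{thm:det-M-ellipsoid} to obtain an M-ellipsoid $E$ of $K$ and an explicit cover $K \subseteq \bigcup_{i=1}^N (x_i+E)$ with $N\le 2^{O(n)}$, then apply the linear change of coordinates mapping $E$ to the Euclidean unit ball $B$. After this reduction, $K$ is centrally symmetric and covered by $2^{O(n)}$ translates of $B$, and the problem becomes estimating $\vol(K)$ in these coordinates, with a final rescaling by the inverse Jacobian.

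Next I would refine the cover: for each ball $x_i+B$, produce an explicit $\eps$-scale cover by $N(B,\eps B)=O(1/\eps)^n$ translates of $\eps B$ via a standard $\eps$-packing of $(1+\eps)B$ followed by doubling. Collecting over all $i$ gives $M=2^{O(n)}\cdot O(1/\eps)^n=O(1/\eps)^n$ centers $y_1,\dots,y_M$ whose $\eps$-balls cover $K$. For each $y_j$ I query the membership oracle to test $y_j\in K$; the output is the sum of $\vol(\eps B)$ over accepted $y_j$, rescaled to the original coordinates. The time is dominated by $O(1/\eps)^n$ oracle calls and $\poly(n)$ arithmetic per call, and the space is polynomial.

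The main technical obstacle is the approximation analysis: I must show the estimator achieves the $(1+\eps)^n$ bound, not merely the $2^{O(n)}$ bound one gets from a naive union bound on the $\eps$-cover (which can overlap $\Theta(1)^n$-fold, inheriting the weakness of Theorem~\ref{thm:det-vol}). To obtain the tight factor I would complement the covering upper bound with a matching maximal-packing lower bound. Central symmetry enters here through two ingredients: (i) the Minkowski-type inclusion $K+\delta K\subseteq(1+\delta)K$ valid for symmetric convex bodies, which controls the outer approximation error, and (ii) a symmetrization argument that upgrades the covering bound $N(E,K)\le 2^{O(n)}$ into an explicit inscription $\alpha E\subseteq K$ with $\alpha$ large enough to make (i) useful. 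Combining these with the $\eps$-cover of step two gives $(1+O(\eps))^n$, and a constant reparameterization of $\eps$ yields the claimed $(1+\eps)^n$.

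The delicate point is reconciling the inscription scale $\alpha$ with the refinement scale $\eps$ so that both the grid count and the approximation ratio come out as claimed; in particular one must exploit the fact that the Euclidean unit ball has volume $\Theta(1/\sqrt n)^n$ in $\mathbb{R}^n$, so that the $\sqrt n$ factors from the inscription bound and from the small-volume-per-grid-cell cancel when one multiplies $N\le 2^{O(n)}$ balls by $O(1/\eps)^n$ sub-cover centers. The central-symmetry hypothesis is essential throughout: without it, neither Minkowski inclusion (i) nor the symmetrization step (ii) goes through, which is precisely the reason the theorem is restricted to centrally symmetric bodies.
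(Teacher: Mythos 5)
Your plan has a genuine gap at exactly the point you flag as ``the main technical obstacle,'' and the fix you sketch cannot work. First, the covering property of an M-ellipsoid cannot be upgraded to an inscription $\alpha E \subseteq K$ with $\alpha$ anywhere near constant: even the best possible ellipsoid sandwiching of a symmetric body is $E \subseteq K \subseteq \sqrt{n}E$ (John, and this is tight), and an M-ellipsoid need not be usefully inscribed at all --- take $K = B_2^n$ and $E$ an ellipsoid with one semi-axis of length $2^{cn/2}$ and the remaining semi-axes a suitable constant less than $1$; both $N(K,E)$ and $N(E,K)$ are $2^{O(n)}$, yet $\vol(K + \eps E) \geq 2^{\Omega(n)}\vol(K)$. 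So the outer-error control you want from $K + \eps E \subseteq (1+O(\eps))K$ is simply unavailable when $E$ is only known to satisfy the covering bounds of Theorem \ref{thm:det-M-ellipsoid}. Second, even setting that aside, counting centers of an overlapping ball cover, or sandwiching between a cover and a maximal packing, can never certify a $(1+\eps)^n$ factor: the covering/packing discrepancy at a fixed scale is itself $2^{\Theta(n)}$, which swamps $(1+\eps)^n$ for any fixed $\eps \le 1$; a genuinely disjoint tiling is needed, not a cover. (Also, $K + \delta K \subseteq (1+\delta)K$ holds for every convex body by convexity alone; central symmetry is not what that step requires --- in the paper symmetry is needed because the $\ell$-norm/Lewis machinery drives the iteration, and the difference-body reduction would cost a $2^{\Theta(n)}$ volume factor, unaffordable at this accuracy.)

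The paper's proof supplies precisely the ingredient your plan is missing: it does not use an M-ellipsoid of $K$ directly, but reruns Milman's iteration with $\eps$-dependent radii ($r_{in} \propto \eps\sqrt{n}/(\sqrt{\ln(1/\eps)}\,\log^{(i)}(n)\,\tilde{\ell}_{K_i}(A_i))$, with $r_{out}$ enlarged correspondingly), producing a surrogate body $K_T$ with two properties that no single ellipsoid of $K$ can provide simultaneously: $e^{-C\eps n}\vol(K) \le \vol(K_T) \le e^{C\eps n}\vol(K)$ (the slowed iteration changes volume by only $e^{O(\eps n/\log^{(i)} n)}$ per step, via the Sudakov and dual Sudakov bounds as in Lemma \ref{lem:iteration}) and $d_{BM}(K_T, B_2^n) \le C\ln(1/\eps)^{5/2}/\eps^2$. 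It then tiles space by disjoint translates of $\eps P$, where $P$ is a maximum-volume parallelepiped inscribed in the final ellipsoid $E \subseteq K_T$, and counts tiles meeting $K_T$: disjointness plus $K_T + \eps E \subseteq (1+\eps)K_T$ sandwiches the estimate between $\vol(K_T)$ and $(1+\eps)^n\vol(K_T)$, the $d_{BM}$ bound caps the number of tiles at $(1/\eps)^{O(n)}$, and polynomial space follows from a reverse-search (Avis--Fukuda) traversal of the tiling rather than storing it. Without constructing such a volume-preserving, $\poly(1/\eps)$-round surrogate body, the $(1+\eps)^n$ guarantee does not follow from Theorem \ref{thm:det-M-ellipsoid} alone.
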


Next we turn to lattice problems.
For a convex body $K \subseteq \R^n$, s.t. $0 \in {\rm interior}(K)$, the gauge function of $K$ is
\[
\|x\|_K = \inf \set{s \geq 0: x \in sK}
\]
for $x \in \R^n$. For symmetric $K$ (i.e. $K=-K$), $\|\cdot\|_K$ is a usual norm on $\R^n$
(we shall refer to $\|\cdot\|_K$ as the norm induced by $K$ and specify asymmetric
whenever relevant).

In recent work, M-ellipsoids were shown to be useful for solving basic lattice problems \cite{DPV-SVP-11} of SVP and CVP.
The Shortest Vector Problem (SVP) is
stated as follows: given an $n$-dimensional lattice $L$ represented by a basis, and a norm defined by a convex body $K$, find a nonzero $v \in L$ such that
$\|v\|_K$ is minimized. In the Closest Vector Problem (CVP), in addition to a lattice and a norm, we are also given a query point $x$ in $\R^n$, and the goal is
to find a vector $v \in L$ that minimizes $\|x-v\|_K$. These problems are central to the geometry of numbers and have applications
to integer programming, factoring polynomials, cryptography, etc.
The fastest known algorithms for solving SVP under general norms, are $2^{O(n)}$ time randomized algorithms based on the AKS sieve
\cite{DBLP:conf/stoc/AjtaiKS01,DBLP:conf/fsttcs/ArvindJ08}. Finding deterministic algorithms of this complexity for both SVP and CVP has been an important open problem.

In fact, the AKS sieve uses an exponential amount of randomness. Improving on this, \cite{DPV-SVP-11} gave a $2^{O(n)}$ Las Vegas algorithm for general norm SVP
which uses only a polynomial amount of randomness. For CVP the complexity was $(2+\gamma)^{O(n)}$ assuming the minimum distance of the query point is at most
$\gamma$ times the length of the shortest vector. In subsequent work \cite{DV12},
we gave a deterministic $O(\log n)^n$ algorithm for the same results. In this paper, we
completely eliminate the randomness. In the statements below, we say that $K$ is {\em well-centered} if $\vol(K \cap -K) \ge 2^{-O(n)} \vol(K)$.
(every convex body is well-centered with respect to its centroid or a point sufficiently close to its centroid).

\begin{theorem}\label{thm:det-SVP}
Given a basis for a lattice $L$ and a well-centered norm $\|.\|_K$ specified by a convex body $K$ both in $\R^n$, the shortest vector in $L$ under the
norm $\|.\|_K$ can be found deterministically using $2^{O(n)}$ time and space.
\end{theorem}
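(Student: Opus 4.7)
The plan is to plug the deterministic M-ellipsoid construction of Theorem~\ref{thm:det-M-ellipsoid} into the M-ellipsoid-based SVP framework developed in \cite{DPV-SVP-11,DV12}, thereby removing the last nondeterministic (or super-exponential) step in that pipeline.

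First, I would invoke Theorem~\ref{thm:det-M-ellipsoid} to compute, in deterministic $2^{O(n)}$ time, an M-ellipsoid $E$ of $K$ (after, if needed, passing to the symmetrized body $K \cap (-K)$, which the well-centered hypothesis guarantees differs from $K$ by at most a $2^{O(n)}$ factor in volume and hence yields an $E$ that remains an M-ellipsoid for $K$ in the sense that $N(K,E)\cdot N(E,K) = 2^{O(n)}$). In parallel, compute a coarse estimate $r_0$ of the shortest-vector length $\lambda_1 = \min_{v \in L\setminus\{0\}} \|v\|_K$ satisfying $r_0 \le \lambda_1 \le 2^{O(n)} r_0$; this is standard and deterministically achievable in polynomial time by applying LLL to the lattice after a linear transform placing $K$ in approximately isotropic position (e.g., via an $O(\sqrt{n})$-approximate L\"owner--John ellipsoid).

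With $E$ and $r_0$ in hand, loop over $O(n)$ geometrically spaced radii $r = r_0, 2 r_0, \ldots, 2^{O(n)} r_0$, and for each $r$ enumerate $L \cap rK$, tracking the nonzero lattice vector of smallest $\|\cdot\|_K$. The enumeration exploits the M-ellipsoid property: $rK$ can be deterministically covered by $N(K,E) = 2^{O(n)}$ translates of $rE$, and inside each translate $c + rE$ the lattice points can be listed using a deterministic $\ell_2$-CVP subroutine such as the Micciancio--Voulgaris Voronoi-cell algorithm, which runs in $2^{O(n)}$ time and space per query. The dual bound $N(E,K) = 2^{O(n)}$, combined with a Minkowski-type volume argument applied to $rK$ for $r = \Theta(\lambda_1)$, guarantees that each translate of $rE$ contains only $2^{O(n)}$ lattice points to enumerate, so the cumulative cost over all translates and all guesses of $r$ stays $2^{O(n)}$.

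The main obstacle is not the plug-in itself---that reduction is essentially present in \cite{DPV-SVP-11,DV12}---but certifying that every subroutine is now deterministic with the claimed $2^{O(n)}$ time and space bounds. This amounts to: (i)~using Theorem~\ref{thm:det-M-ellipsoid} to produce the covering of $rK$ by ellipsoidal cells on the fly; (ii)~invoking the deterministic Voronoi-cell CVP algorithm as an oracle to list lattice points inside each ellipsoidal cell in $2^{O(n)}$ time; and (iii)~confirming that the coarse initial estimate $r_0$ loses only a $2^{O(n)}$ factor, so that only $O(n)$ geometric radii suffice. Each of these points was already handled in \cite{DV12}, with the M-ellipsoid supplied there by the $O(\log n)^n$-time $\ell$-ellipsoid construction; substituting our new deterministic $2^{O(n)}$ construction from Theorem~\ref{thm:det-M-ellipsoid} immediately yields the claimed deterministic $2^{O(n)}$ time and space SVP algorithm.
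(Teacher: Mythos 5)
Your proposal is correct and follows essentially the same route as the paper: it plugs the deterministic M-ellipsoid of Theorem~\ref{thm:det-M-ellipsoid} into the reduction of \cite{DPV-SVP-11} (cover a suitable scaling of $K$ by $2^{O(n)}$ translates of $E$, enumerate lattice points in each ellipsoid via the deterministic Micciancio--Voulgaris algorithm, and bound the point count per translate by a packing argument using well-centeredness). The paper simply cites the enumeration theorem and the bound $G(\lambda_1 K,L)=2^{O(n)}$ from \cite{DPV-SVP-11} as black boxes, whereas you re-derive the same reduction in slightly more explicit detail (LLL-based scaling guess, geometric search over radii); there is no substantive difference.
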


\begin{theorem}\label{thm:det-CVP}
Given a basis for a lattice $L$, any well-centered $n$-dimensional convex body $K$ and a query point $x$ in $\R^n$, the closest vector in $L$ to $x$ in the norm
$\|.\|_K$ defined by $K$ can be computed  deterministically using $(2+\gamma)^{O(n)}$ time and space, provided that the minimum distance is at most $\gamma$
times the length of the shortest nonzero vector of $L$ under $\length{\cdot}_{K}$.
\end{theorem}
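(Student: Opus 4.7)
The plan is to plug the new deterministic M-ellipsoid algorithm (Theorem \ref{thm:det-M-ellipsoid}) into the M-ellipsoid based CVP framework of \cite{DPV-SVP-11}, which is precisely the framework that was previously instantiated with a randomized or a suboptimal $O(\log n)^n$-time M-ellipsoid procedure in \cite{DPV-SVP-11,DV12}. Concretely, I would first invoke Theorem \ref{thm:det-M-ellipsoid} on $K$ to obtain in deterministic time $2^{O(n)}$ an ellipsoid $E$ with $N(K,E)N(E,K)\le 2^{O(n)}$. The well-centered hypothesis $\vol(K\cap -K)\ge 2^{-O(n)}\vol(K)$ is used to transfer covering estimates freely between $K$ and the symmetric body $K\cap -K$, so norm-based arguments (which intrinsically want a symmetric unit ball) go through at the cost of a $2^{O(n)}$ factor that is absorbed in $2^{O(n)}$.

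Next, I would determine a scale for the search region. Using Theorem \ref{thm:det-SVP}, compute in deterministic $2^{O(n)}$ time a shortest nonzero vector $u\in L$ in the gauge $\|\cdot\|_K$, and set $r:=\gamma\|u\|_K$, which by hypothesis is at least the distance from $x$ to $L$. The candidates for the closest vector therefore lie in the translate $S:=x-rK$. I would then cover $S$ by translates of $E$ using the standard decomposition
\[
N(S,E)=N(rK,E)\;\le\;N(rK,rE)\cdot N(rE,E)\;\le\;2^{O(n)}\cdot(1+2r/\rho_E)^n,
\]
where $\rho_E$ is the natural scale of $E$ (chosen, say, so that $\vol(E)=\vol(K)$); a direct computation using $u\in rK$ and $N(E,K)\le 2^{O(n)}$ shows $r/\rho_E=O(\gamma)$, so $N(S,E)\le (2+\gamma)^{O(n)}$. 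An explicit list of centers $c_1,\dots,c_M$ of matching size is produced by the same deterministic $\ell$-covering primitives from \cite{DV12} that were already used in the $O(\log n)^n$ variant. The final step is, for each $i$, to enumerate $L\cap(c_i+E)$ deterministically, apply $\|x-\cdot\|_K$ to each, and return the minimizer.

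The main technical obstacle is the list-enumeration subroutine: we need all lattice points in an ellipsoid in deterministic $2^{O(n)}$ time, rather than just the single closest one. I would handle this by applying the linear transformation that sends $E$ to the Euclidean unit ball and then invoking the deterministic Voronoi-cell machinery of Micciancio--Voulgaris, which not only solves $\ell_2$-CVP deterministically in $2^{O(n)}$ time but also enumerates all lattice points inside any $\ell_2$-ball of radius comparable to the covering radius. The count $|L\cap(c_i+E)|\le 2^{O(n)}$ follows from $\vol(E)=\vol(K)$ and the fact that $K$ is well-centered, so this enumeration runs in $2^{O(n)}$ time per ellipsoid. Multiplying the $(2+\gamma)^{O(n)}$ ellipsoids by the $2^{O(n)}$ enumeration cost yields the total $(2+\gamma)^{O(n)}$ bound, and the space usage is dominated by the enumeration step, which is again $2^{O(n)}$.
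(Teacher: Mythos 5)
Your high-level plan---feed the deterministic M-ellipsoid of Theorem \ref{thm:det-M-ellipsoid} into the covering-and-enumeration reduction of \cite{DPV-SVP-11}, with the per-ellipsoid lattice enumeration done via the Micciancio--Voulgaris machinery \cite{DBLP:conf/stoc/MicciancioV10}---is exactly the paper's route. However, your quantitative accounting has a genuine gap: a scale error. You cover the search region $S=x-rK$, $r=\gamma\lambda_1$, by translates of the \emph{unscaled} M-ellipsoid $E$ of $K$, and claim $N(S,E)\le(2+\gamma)^{O(n)}$ via ``$r/\rho_E=O(\gamma)$'' with $\rho_E$ determined purely by the geometry of $K$ (through $\vol(E)=\vol(K)$). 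This cannot hold: $r=\gamma\lambda_1$ carries the \emph{lattice} scale, which is arbitrary relative to $K$. Rescale $L$ (and $x$) by a huge factor; then $K$, $E$ and $\rho_E$ are unchanged while $N(rK,E)\ge \vol(rK)/\vol(E)=\gamma^n\lambda_1^n\,2^{-O(n)}$ blows up, so $r/\rho_E$ is not $O(\gamma)$ and the number of covering ellipsoids is not $(2+\gamma)^{O(n)}$. The companion claim that $|L\cap(c_i+E)|\le 2^{O(n)}$ ``follows from $\vol(E)=\vol(K)$ and well-centeredness'' fails for the same reason: a sufficiently fine lattice places arbitrarily many points in $E$, and no volume comparison between $E$ and $K$ alone can bound lattice point counts.

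The fix, which is the paper's actual accounting inherited from \cite{DPV-SVP-11}, is to scale the ellipsoid to the lattice and to charge the $(2+\gamma)^{O(n)}$ factor to the lattice-point count rather than to the covering number. Cover $x-\gamma\lambda_1 K$ by translates of $\gamma\lambda_1 E$: their number is $N(K,E)\le 2^{O(n)}$, independent of $\gamma$. Each translate of $\gamma\lambda_1 E$ contains at most $N(\gamma\lambda_1 E,\gamma\lambda_1 K)\cdot G(\gamma\lambda_1 K,L)\le 2^{O(n)}(2+\gamma)^{O(n)}$ lattice points, where $G(\gamma\lambda_1 K,L)\le(2+\gamma)^{O(n)}$ is the packing bound: distinct lattice points are at symmetrized distance at least $\lambda_1$, so the translates $v+\tfrac{\lambda_1}{2}(K\cap-K)$, $v\in L$, are disjoint, and comparing volumes inside a translate of $\gamma\lambda_1 K+\tfrac{\lambda_1}{2}(K\cap-K)$ together with the well-centeredness hypothesis $\vol(K\cap-K)\ge 2^{-O(n)}\vol(K)$ gives the count. (Equivalently, simply invoke the stated reduction: $K'\cap L$ is computable in time $G(K',L)\,N(K',E')N(E',K')\,2^{O(n)}$ with $K'=x-\gamma\lambda_1 K$ and $E'=\gamma\lambda_1 E$.) If you insist on your split where $(2+\gamma)^{O(n)}$ sits in the covering number, the covering ellipsoid must be $\lambda_1 E$, not $E$: then $N(\gamma\lambda_1 K,\lambda_1 E)\le N(K,E)(2\gamma+1)^n$ and each translate of $\lambda_1 E$ holds at most $N(\lambda_1 E,\lambda_1 K)\,G(\lambda_1 K,L)\le 2^{O(n)}$ lattice points by the SVP-scale packing argument. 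Either way the lattice scale $\lambda_1$ must enter the ellipsoid, and the per-ellipsoid point count must come from a packing argument at scale $\lambda_1$, not from $\vol(E)=\vol(K)$; your use of Theorem \ref{thm:det-SVP} to compute $\lambda_1$ and of Micciancio--Voulgaris for the enumeration is otherwise consistent with the paper.
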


The approach in \cite{DPV-SVP-11} is to reduce the problem for general norms to the the Euclidean norm, or more specifically, to enumerating lattice points in ellipsoids.  We describe summarize the reduction in Section \ref{sec:SVP}.
In \cite{DPV-SVP-11}, the M-ellipsoid construction is a randomized polynomial-time algorithm based on the existence proof by Klartag \cite{Klartag2006}.
This approach is based on estimating a covariance matrix and seems inherently difficult to derandomize. In \cite{DV12}, we gave a deterministic algorithm based on computing an approximate minimum mean-width ellipsoid. For this approximation, we get that the covering bound is $N(K,E)N(E,K) = O(\log n)^n$, giving a deterministic algorithm of this complexity. Here we completely algorithmicize Milman's existence proof, to obtain the best possible deterministic complexity of $2^{O(n)}$.
By adjusting the parameters in the resulting algorithm to ``slow down" Milman's iteration, we get the optimal trade-off
between approximation and complexity for volume computation.

\section{Techniques from convex geometry}


\subsection{The Lewis ellipsoid}
Let $\alpha$ be a norm on $n \times n$ matrices. We define the dual norm $\alpha^*$ for any $S \in \R^{n \times n}$ as
\begin{equation}
\label{eq:dual-norm}
\alpha^*(S) = \sup \{\tr(SA) \, : \, A \in \R^{n\times n}, \alpha(A) \le 1\}.
\end{equation}
For a matrix $A \in \R^{n \times n}$, we denote its transpose by $A^T$, and its inverse (when it exists) by $A^{-1}$.

\begin{theorem}\label{thm:Lewis}\cite{Lewis79}
For any norm $\alpha$ on $\R^{n \times n}$, there is an invertible linear transformation $A \in \R^{n \times n}$ such that
\[
\alpha(A) = 1 \mbox{ and } \alpha^*(A^{-1}) = n.
\]
\end{theorem}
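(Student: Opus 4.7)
The plan is to prove Lewis's theorem via an extremal argument. I would consider the optimization problem of maximizing $|\det(B)|$ over the unit ball $\{B \in \R^{n \times n} : \alpha(B) \le 1\}$, and show that any optimizer $A$ is the desired matrix (after a trivial normalization).

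First I would verify existence and normalization. Since all norms on the finite-dimensional space $\R^{n\times n}$ are equivalent, the $\alpha$-unit ball is compact and $|\det|$ attains its maximum on it. The unit ball has nonempty interior and hence contains invertible matrices, so the optimal value is strictly positive; in particular any maximizer $A$ is invertible. By homogeneity the optimum is attained on the boundary, so $\alpha(A) = 1$. Replacing $A$ by $-A$ if needed (which leaves $\alpha$ invariant) I may assume $\det(A) > 0$, so $A$ is in fact a maximizer of the smooth function $\det$ over the convex feasible region $\{B : \alpha(B) \le 1\}$.

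Now I would establish $\alpha^*(A^{-1}) = n$ by proving both inequalities directly. The lower bound is free: plugging $B = A$ into the definition of the dual norm gives $\alpha^*(A^{-1}) \ge \tr(A^{-1} A)/\alpha(A) = n$. For the upper bound I would argue by contradiction. Suppose there exists $H \in \R^{n\times n}$ with $\alpha(H) \le 1$ and $\tr(A^{-1} H) > n$, and consider the path $B_t = (1-t)A + tH$. By convexity of $\alpha$, $B_t$ stays feasible for $t \in [0,1]$. Writing $B_t = A\bigl((1-t)I + tA^{-1}H\bigr)$ and using the first-order expansion of $\det$ at the identity, one gets
\[
\det(B_t) = \det(A)\bigl(1 + t(\tr(A^{-1}H) - n) + O(t^2)\bigr),
\]
so $\det(B_t) > \det(A) > 0$ for all sufficiently small $t > 0$, contradicting the maximality of $|\det(A)|$. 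Hence $\alpha^*(A^{-1}) \le n$, and combined with the lower bound the equality follows.

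The core step is recognizing that the extremal problem $\max|\det(B)|$ subject to $\alpha(B) \le 1$ is the correct variational characterization to use; the rest amounts to the KKT condition $A^{-1} = n \cdot G$ for some $G$ in the subdifferential $\partial\alpha(A)$, carried out by hand to avoid subgradient machinery. I don't anticipate a significant obstacle: the only mild subtlety is the sign normalization $\det(A) > 0$, and the expansion of $\det$ above is standard.
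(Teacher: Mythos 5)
Your proposal follows essentially the same route the paper indicates for Theorem \ref{thm:Lewis}: maximize the determinant over the $\alpha$-unit ball (program (\ref{Lewis-prog})) and extract $\alpha^*(A^{-1}) \le n$ from a first-order perturbation, with the lower bound $\alpha^*(A^{-1}) \ge \tr(A^{-1}A) = n$ coming for free; this is exactly the ``simple variational argument'' that the paper reproduces, in approximate and PSD-restricted form, in Lemma \ref{lem:approx-sdp}. One small slip: replacing $A$ by $-A$ does not change the sign of $\det(A)$ when $n$ is even, so you cannot normalize to $\det(A) > 0$ that way. The slip is harmless: either maximize $\det$ rather than $\abs{\det}$ from the start (its maximum is positive since $\epsilon I_n$ is feasible, and your perturbation argument applies verbatim), or note that when $\det(A) < 0$ the same expansion makes $\det(B_t)$ strictly more negative, so $\abs{\det(B_t)} > \abs{\det(A)}$ and the contradiction with maximality of $\abs{\det}$ holds in either case.
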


The proof of the above theorem is based on examining the properties of the optimal solution to the following mathematical program:
\begin{align}
\label{Lewis-prog}
\begin{split}
\max \det(A)&\\
\text{s.t.} & \\
	           A &\in \R^{n \times n} \\
                   \quad \alpha(A) & \leq 1
\end{split}
\end{align}
From here, showing that the optimal $A$ satisfies $\alpha^*(A^{-1})$ is a simple variational argument (reproduced in Lemma \ref{lem:approx-sdp}).

%

We will be interested in norms $\alpha$ of the following form. Let $K \subseteq \R^n$ denote a symmetric convex body with associated
norm $\|\cdot\|_K$, and let $\gamma_n$ denote the canonical Gaussian measure on $\R^n$. We define the $\ell$-norm with respect to $K$
for $A \in \R^{n \times n}$ as
\[
\ell_K(A) = \left(\int \|Ax\|^2_K d\gamma_n(x)\right)^{1/2}
\]
The $\ell$-norm was first studied and defined by Tomczak-Jaegermann and Figiel \cite{TjF79}.

The next crucial ingredient is a connection between the dual norm $\alpha^*$ defined above and the $\ell$-norm with respect to the polar $K^* = \set{x \in \R^n:
\pr{x}{y} \leq 1 ~ \forall y \in K}$, namely,
\[
\ell_{K^*}(A) = \left(\int \|Ax\|^2_{K^*} d\gamma_n(x)\right)^{1/2}.
\]

For two convex bodies $K,L \subseteq \R^n$ the Banach-Mazur distance between $K$ and $L$ is
\[
d_{BM}(K,L) = \inf \set{s: s \geq 1, TK \subseteq L-x \subseteq sTK, x \in \R^n,
T \in \R^{n \times n} \text{ invertible }}
\]

\begin{lemma}\label{lem:K-convexity-bound}\cite{Pis89}
For $A \in \R^{n \times n}$, we have that
\[
\ell_{K^*}(A^T) \le 4(1+\log d_{BM}(K,B_2^n)) \ell_K^*(A)
\]
\end{lemma}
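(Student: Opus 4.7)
The plan is to interpret both sides as Gaussian $L_2$ norms and reduce the inequality to a bound on the norm of the projection onto the first Wiener chaos --- that is, to Pisier's K-convexity theorem. Throughout let $g \sim \gamma_n$ be a standard Gaussian on $\R^n$ and write $X_K = (\R^n, \|\cdot\|_K)$. First I would rewrite the trace pairing via $\tr(AB) = \E[g^T A B g] = \E\langle A^T g, Bg\rangle$. Viewing $Bg$ as an $X_K$-valued random variable gives $\ell_K(B) = \|Bg\|_{L_2(\gamma_n; X_K)}$, and Hölder's inequality pointwise, saturated in $L_2$, shows that
\[
\ell_{K^*}(A^T) = \sup\bigl\{\E\langle A^T g, F(g)\rangle : F: \R^n \to X_K \text{ measurable}, \ \|F\|_{L_2(\gamma_n; X_K)} \leq 1\bigr\}.
\]
By contrast, definition~\eqref{eq:dual-norm} gives $\ell_K^*(A) = \sup\{\tr(AB) : \ell_K(B) \leq 1\}$, which is the \emph{same} supremum restricted to \emph{linear} test functions $F(g) = Bg$. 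The inequality therefore reduces to bounding the cost of enlarging the supremum from linear to arbitrary $F$ by $4(1+\log d_{BM}(K, B_2^n))$.

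To pass from an arbitrary $F$ back to a linear $B$, I would apply the Gauss projection $P_1$ on $L_2(\gamma_n; X_K)$ onto the first Wiener chaos (the subspace of maps $g \mapsto Mg$), and set $Bg := P_1 F(g)$. Because $A^T g$ already lies in the first chaos and $P_1$ is self-adjoint for the trace-type $(X_K, X_K^*)$ pairing,
\[
\E\langle A^T g, Bg\rangle = \E\langle A^T g, P_1 F(g)\rangle = \E\langle P_1(A^T g), F(g)\rangle = \E\langle A^T g, F(g)\rangle,
\]
so the pairing is preserved. Meanwhile $\ell_K(B) = \|P_1 F\|_{L_2(\gamma_n; X_K)} \leq \|P_1\| \cdot \|F\|_{L_2(\gamma_n; X_K)}$, and $\|P_1\|$ on $L_2(\gamma_n; X_K)$ is, by definition, the Gaussian K-convexity constant $K(X_K)$. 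Combining yields $\ell_{K^*}(A^T) \leq K(X_K)\, \ell_K^*(A)$.

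The argument is then closed by invoking Pisier's K-convexity theorem, $K(X) \leq c\bigl(1 + \log d(X, \ell_2^n)\bigr)$ for every finite-dimensional normed space $X$; specialized to $X_K$ with $d(X_K, \ell_2^n) = d_{BM}(K, B_2^n)$ and Pisier's explicit constant, this gives the factor $4$ in the statement. The main obstacle in a fully self-contained proof would be Pisier's theorem itself, whose proof relies on Gaussian hypercontractivity and careful Ornstein--Uhlenbeck semigroup estimates; here it is simply cited from \cite{Pis89}, and the steps above serve to translate that operator-theoretic fact into the convex-geometric language of $\ell$-norms used in the remainder of the paper.
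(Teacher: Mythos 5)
Your proposal is correct and follows exactly the route the paper relies on: the paper offers no proof of this lemma, citing \cite{Pis89}, and your reduction --- identifying $\ell_{K^*}(A^T)$ as a supremum of $\E\langle A^T g, F(g)\rangle$ over $(\R^n,\|\cdot\|_K)$-valued test functions with $L_2(\gamma_n)$ norm at most one, projecting $F$ onto the first Gaussian chaos to recover a linear $B$ with $\tr(AB)=\E\langle A^Tg,F(g)\rangle$, and bounding $\ell_K(B)$ by the K-convexity constant times one --- is precisely how Pisier derives it. The only caveat is that the explicit factor $4(1+\log d_{BM}(K,B_2^n))$ is inherited from the constant in Pisier's K-convexity theorem, which you (like the paper) treat as a black box rather than reprove.
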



\subsection{Covering numbers and volume estimates}

Let $B_2^n \subseteq \R^n$ denote the $n$-dimensional Euclidean ball. Recall
that $N(K,D)$ is the number of translates of $D$ required to cover $K$. The
following bounds for convex bodies $K,D \subset \R^n$ are classical. We use $c,C$ to
denote absolute constants here and later in the paper.

\begin{lemma}\label{lem:covering1}
For any two symmetric convex bodies $K,D$,
\[
\frac{\vol(K)}{\vol(K \cap D)} \le N(K,D) \le 3^n \frac{\vol(K)}{\vol(K \cap D)}.
\]
\end{lemma}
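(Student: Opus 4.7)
The plan is to handle the two inequalities separately. The lower bound will follow from a Brunn--Minkowski symmetrization argument, while the upper bound will come from a greedy maximal packing, where the key trick is to use $K \cap D$ (rather than $D$ itself) as the body doing the packing, which is what enables the sharper $3^n$ constant.

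For the lower bound, suppose $K \subseteq \bigcup_{i=1}^{N} (x_i + D)$ with $N = N(K,D)$. Then $\vol(K) \le \sum_i \vol(K \cap (x_i + D))$, so it suffices to show $\vol(K \cap (x + D)) \le \vol(K \cap D)$ for every $x \in \R^n$. After translating, $\vol(K \cap (x + D)) = \vol((K - x) \cap D)$. By symmetry of both $K$ and $D$, the antipodal map $y \mapsto -y$ sends $(K - x) \cap D$ onto $(K + x) \cap D$, so these two sets have equal volume. A one-line check using convexity and symmetry shows
\[
\tfrac{1}{2}\bigl((K - x) \cap D\bigr) + \tfrac{1}{2}\bigl((K + x) \cap D\bigr) \subseteq K \cap D,
\]
and Brunn--Minkowski then gives $\vol((K-x) \cap D)^{1/n} \le \vol(K \cap D)^{1/n}$, completing the lower bound.

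For the upper bound, take a maximal collection of points $x_1, \ldots, x_N \in K$ such that the translates $x_i + \tfrac{1}{2}(K \cap D)$ are pairwise disjoint. Two observations finish the proof. First, for any $y \in K$, maximality forces $y + \tfrac{1}{2}(K \cap D)$ to intersect some $x_i + \tfrac{1}{2}(K \cap D)$; since $K \cap D$ is symmetric and convex, the difference set $\tfrac{1}{2}(K \cap D) - \tfrac{1}{2}(K \cap D)$ equals $K \cap D \subseteq D$, so $y \in x_i + D$. Thus $K \subseteq \bigcup_i (x_i + D)$, giving $N(K, D) \le N$. Second, each packed translate satisfies $x_i + \tfrac{1}{2}(K \cap D) \subseteq K + \tfrac{1}{2}K = \tfrac{3}{2}K$, so by disjointness $N \cdot 2^{-n}\vol(K \cap D) \le (3/2)^n \vol(K)$, and rearranging yields $N(K,D) \le N \le 3^n \vol(K)/\vol(K \cap D)$.

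The main subtlety is in the lower bound: getting the constant $1$ (not $2^n$ or similar) depends crucially on the symmetrization step, which uses both $-K = K$ and $-D = D$ to produce two sets of equal volume whose midpoint average sits inside $K \cap D$; without that, Brunn--Minkowski would only give a much weaker bound. The upper bound is then essentially a textbook packing argument, with the one observation that packing by $\tfrac{1}{2}(K \cap D)$ rather than $\tfrac{1}{2}D$ is what makes the volume comparison involve $\vol(K \cap D)$ and produces the constant $3^n$.
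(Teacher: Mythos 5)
Your proof is correct. Note that the paper does not prove this lemma at all --- it is stated as a classical fact without argument --- and what you have written is precisely the standard classical proof it implicitly invokes: the greedy packing by translates of $\tfrac12(K\cap D)$ inside $\tfrac32 K$ for the upper bound, and for the lower bound the observation that $x \mapsto \vol(K\cap(x+D))$ is maximized at $x=0$ for symmetric $K,D$, which you justify correctly via the antipodal map and Brunn--Minkowski. You also rightly flag the one genuinely non-trivial point, namely that the constant $1$ in the lower bound (with $\vol(K\cap D)$ rather than $\vol(D)$ in the denominator) needs this symmetrization step; both halves of your argument are complete and sound.
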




The next lemma is from \cite{Mil88}.

\begin{lemma}\label{lem:volume1}
Let $D \subseteq \alpha K$, $\alpha \geq 1$. Then,
\[
\vol(\conv{\{K,D\}}) \le 4 \alpha n N(D,K) \vol(K).
\]
\end{lemma}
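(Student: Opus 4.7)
The plan is to cover $D$ by $N=N(D,K)$ translates of $K$, squeeze $\conv(K\cup D)$ into a union of short ``prisms'' $K+[0,x_i]$, apply Cauchy's Minkowski-sum formula to each prism, and finally bound the resulting ``shadow volumes'' via Steiner symmetrization and Brunn's theorem.

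First, fix a cover $D\subseteq\bigcup_{i=1}^N(x_i+K)$ of minimum cardinality $N=N(D,K)$. By shifting each $x_i$ to a point of $(x_i+K)\cap D$ (using that $K$ is symmetric, so $K=-K$), I may assume $x_i\in D\subseteq\alpha K$, hence $\|x_i\|_K\le\alpha$. The geometric reduction is the identity
\[
\conv(K\cup D)\;=\;K\cup\bigcup_{d\in D}\conv(K\cup\{d\}),
\]
obtained by splitting any convex combination of $K$- and $D$-points into its $K$-mass and $D$-mass. For each $d\in D$, write $d=x_i+k'$ with $k'\in K$; then for every $\lambda\in[0,1]$,
\[
(1-\lambda)K+\lambda d \;=\;\bigl[(1-\lambda)K+\lambda k'\bigr]+\lambda x_i\;\subseteq\;K+\lambda x_i,
\]
because $(1-\lambda)K+\lambda k'\subseteq K$ by convexity. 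Taking the union over $\lambda$ yields $\conv(K\cup\{d\})\subseteq K+[0,x_i]$, and therefore
\[
\conv(K\cup D)\;\subseteq\;\bigcup_{i=1}^N\bigl(K+[0,x_i]\bigr).
\]

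Next, Cauchy's formula gives $\vol(K+[0,x_i])=\vol(K)+|x_i|\cdot\vol_{n-1}(\pi_{x_i^\perp}(K))$, where $|\cdot|$ is the Euclidean norm and $\pi_{x_i^\perp}$ is orthogonal projection. Since all the sets $K+[0,x_i]$ contain $K$, a union bound gives
\[
\vol(\conv(K\cup D))\;\le\;\vol(K)+\sum_{i=1}^N |x_i|\cdot\vol_{n-1}\bigl(\pi_{x_i^\perp}(K)\bigr).
\]
The main technical step, and the one I expect to be the real obstacle, is the shadow estimate
\[
|x|\cdot\vol_{n-1}\bigl(\pi_{x^\perp}(K)\bigr)\;\le\;n\,\|x\|_K\,\vol(K)\qquad\text{for all }x\in\R^n.
\]
To prove it, write $v=x/|x|$ so that $|x|=\|x\|_K\,\rho_K(v)$ with $\rho_K(v)=1/\|v\|_K$. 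Steiner symmetrize $K$ in the direction $v$ to obtain a symmetric convex body $K^*$ with $\vol(K^*)=\vol(K)$ and $\pi_{v^\perp}(K^*)=\pi_{v^\perp}(K)$; by construction, $K^*\cap v^\perp$ coincides (up to measure zero) with the full shadow $\pi_{v^\perp}(K)$, and the Euclidean half-width of $K^*$ along $v$ equals $\tfrac12\max_y\ell(y)=\rho_K(v)$ (using that the longest fiber of a symmetric $K$ sits over the origin, by Brunn). Applying the standard ``central section'' consequence of Brunn to $K^*$ yields
\[
\vol_{n-1}\bigl(\pi_{v^\perp}(K)\bigr)\;=\;\vol_{n-1}(K^*\cap v^\perp)\;\le\;\frac{n\,\vol(K^*)}{2\,\rho_K(v)}\;=\;\frac{n\,\vol(K)}{2\,\rho_K(v)},
\]
and multiplying by $|x|=\|x\|_K\,\rho_K(v)$ gives the claim (with constant $n/2$).

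Combining the three ingredients, with $\|x_i\|_K\le\alpha$,
\[
\vol(\conv(K\cup D))\;\le\;\vol(K)\Bigl(1+\tfrac{n\alpha}{2}\,N(D,K)\Bigr)\;\le\;4\alpha n\,N(D,K)\,\vol(K),
\]
where the last inequality is comfortable since $n,\alpha,N(D,K)\ge 1$. The combinatorial/Cauchy part (Steps 1--3) is routine; the real content lies in the Steiner-symmetrization step that replaces the hard-to-control \emph{shadow} $\pi_{v^\perp}(K)$ by a \emph{central section} of a symmetric body of the same volume, so that Brunn's theorem becomes applicable.
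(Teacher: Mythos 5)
The paper never proves this lemma --- it is quoted as a classical fact from Milman's paper \cite{Mil88} --- so there is no internal argument to compare against, and your proof has to stand on its own. It essentially does: the decomposition $\conv\{K,D\}\subseteq\bigcup_{i}\bigl(K+[0,x_i]\bigr)$ via segments to points of $D$, the exact formula $\vol(K+[0,x])=\vol(K)+|x|\,\vol_{n-1}(\pi_{x^\perp}(K))$, the union bound exploiting that every prism contains $K$, and the shadow estimate $|x|\,\vol_{n-1}(\pi_{x^\perp}(K))\le \frac{n}{2}\norm{x}_K\vol(K)$ obtained by Steiner symmetrizing in direction $v=x/|x|$ (so that the central section of the symmetral is exactly the shadow, the symmetral reaches heights $\pm\rho_K(v)$, and Brunn's concavity of $f^{1/(n-1)}$ gives $\vol(K^*)\ge \frac{2\rho_K(v)}{n}f(0)$) are all correct, and the final constants close with plenty of slack.

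Two caveats. First, the re-centering step is not valid as written: if you replace each center $x_i$ by a point $d_i\in(x_i+K)\cap D$, the translates $d_i+K$ need not cover $D$ any more (the standard trick only gives a cover by $d_i+2K$), yet you later rely on the covering property to write $d=x_i+k'$ with $k'\in K$. The repair is to not shift at all: discard translates disjoint from $D$, pick $d_i\in(x_i+K)\cap D$, and note $x_i=d_i-k_i$ with $d_i\in\alpha K$, $k_i\in K$, so by symmetry $\norm{x_i}_K\le\alpha+1\le 2\alpha$; this only doubles the coefficient and the conclusion $\vol(\conv\{K,D\})\le\vol(K)\bigl(1+n\alpha N(D,K)\bigr)\le 4\alpha n N(D,K)\vol(K)$ still follows. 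Second, your argument uses central symmetry of $K$ in two places (the gauge bound just mentioned, and the fact that the longest fiber sits over the origin, so the symmetral has half-width $\rho_K(v)$). The lemma is stated in the paper without that hypothesis, but every application in the paper is to symmetric bodies, so this is consistent; you should simply state the symmetry assumption explicitly, or note what changes for a general $K$ with $0$ in its interior.
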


The following are the Sudakov and dual Sudakov inequalities (see e.g., Section $6$ of \cite{GiaM01}).
\begin{lemma}[Sudakov Inequality]\label{lem:Sudakov}
For any $t > 0$, and invertible matrix $A \in \R^{n \times n}$
\[
N(K,tAB_2^n) \le e^{C\ell_{K^*}(A^{-T})^2/t^2}.
\]
\end{lemma}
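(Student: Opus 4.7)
The plan is to reduce Sudakov's inequality to the classical Sudakov minoration for Gaussian processes. First, I would absorb the matrix $A$ by a linear change of variable: since covering numbers are invariant under an invertible linear map applied simultaneously to both arguments, $N(K, tAB_2^n) = N(L, tB_2^n)$ with $L := A^{-1}K$ a symmetric convex body. On the right-hand side, the support-function identity $\|y\|_{K^*} = \sup_{z \in K}\pr{y}{z}$ combined with $\pr{A^{-T}x}{z} = \pr{x}{A^{-1}z}$ rewrites
\[
\ell_{K^*}(A^{-T})^2 \;=\; \int \Bigl(\sup_{y \in L}\pr{x}{y}\Bigr)^2 d\gamma_n(x) \;\ge\; \Bigl(\int \sup_{y \in L}\pr{x}{y}\, d\gamma_n(x)\Bigr)^2 \;=:\; w(L)^2,
\]
where the inequality is Jensen's and $w(L)$ is the Gaussian mean width of $L$.

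Second, I would bound $N(L, tB_2^n)$ by a Euclidean packing number and invoke Sudakov minoration. Let $\{x_1, \ldots, x_N\} \subseteq L$ be a maximal $t$-separated subset in the Euclidean metric; by maximality $L \subseteq \bigcup_i (x_i + tB_2^n)$, hence $N(L, tB_2^n) \le N$. The canonical Gaussian process $X_i := \pr{g}{x_i}$ with $g \sim \gamma_n$ has increments $\E\bigl[(X_i - X_j)^2\bigr] = \|x_i - x_j\|_2^2 \ge t^2$, and Sudakov's minoration for Gaussian processes produces
\[
c\, t\, \sqrt{\log N} \;\le\; \E\,\max_i X_i \;\le\; \int \sup_{y \in L}\pr{x}{y}\, d\gamma_n(x) \;=\; w(L) \;\le\; \ell_{K^*}(A^{-T}).
\]
Rearranging gives $\log N \le C\,\ell_{K^*}(A^{-T})^2 / t^2$, which is the stated bound.

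The only nontrivial input in this plan is Sudakov's minoration inequality itself, which I would invoke as a classical black-box (see, e.g., Chapter~3 of Ledoux--Talagrand, or the Gaussian-process material in \cite{Pis89}). Its standard proof uses the Slepian--Fernique Gaussian comparison theorem to dominate $\E \max_i X_i$ below by the expected maximum of $N$ i.i.d.\ centered Gaussians of variance $\Theta(t^2)$, which is of order $t\sqrt{\log N}$. This is the one step that genuinely requires a Gaussian concentration argument; everything else amounts to bookkeeping with the linear invariance of covering numbers, the support-function representation of $\|\cdot\|_{K^*}$, and Jensen's inequality connecting $\ell_{K^*}(A^{-T})$ to $w(L)$.
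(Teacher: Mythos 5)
Your argument is correct. Note that the paper does not prove this lemma at all: it is stated as a classical fact with a pointer to Section 6 of the Giannopoulos--Milman survey, so there is no internal proof to compare against. What you have written is the standard derivation of the geometric Sudakov inequality: linear invariance of covering numbers to replace $(K, tAB_2^n)$ by $(A^{-1}K, tB_2^n)$, the identity $\|A^{-T}x\|_{K^*} = \sup_{y \in A^{-1}K}\pr{x}{y}$ together with Jensen to lower-bound $\ell_{K^*}(A^{-T})$ by the Gaussian mean width of $A^{-1}K$, a maximal $t$-separated set to dominate the covering number by a packing number, and Gaussian Sudakov minoration (via Slepian--Fernique) to bound the packing number by $e^{Cw^2/t^2}$. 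All steps check out, including the symmetric-$K$ identification of $\|\cdot\|_{K^*}$ with the support function of $K$, which is the setting in which the paper uses the lemma. Invoking Sudakov minoration for Gaussian processes as a black box is a reasonable stopping point, since that is a genuinely distinct (process-level) statement and the paper itself treats the entire lemma as a citation; if one wanted a fully self-contained account, that comparison-theorem step is the only piece requiring real work.
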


\begin{lemma}[Dual Sudakov Inequality]\label{lem:DualSudakov}
For any $t > 0$, and $A \in \R^{n \times n}$
\[
N(AB_2^n, tK) \leq e^{C\ell_{K}(A)^2/t^2}.
\]
\end{lemma}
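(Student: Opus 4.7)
The plan is to carry out the classical packing-and-Gaussian-measure argument for the dual Sudakov inequality. First I would reduce to $A = I$: for invertible $A$, covering numbers are linearly equivariant, so $N(AB_2^n, tK) = N(B_2^n, tL)$ where $L = A^{-1}K$, and the definition of the $\ell$-norm gives
\[
\ell_L(I)^2 = \int \|x\|_{A^{-1}K}^2\, d\gamma_n(x) = \int \|Ax\|_K^2\, d\gamma_n(x) = \ell_K(A)^2.
\]
(Non-invertible $A$ is handled by a standard perturbation.) It therefore suffices to prove $\log N(B_2^n, tL) \le C\, \ell_L(I)^2/t^2$ for an arbitrary symmetric convex body $L$.

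Next, I would take a maximal $t$-separated subset $\{x_1,\ldots,x_N\}$ of $B_2^n$ with respect to $\|\cdot\|_L$. Maximality gives $N(B_2^n, tL) \le N$, and the symmetry of $L$ together with the triangle inequality makes the translates $x_i + (t/2)L$ pairwise disjoint. To lower bound the measure of each translate, I would use a Gaussian with tunable covariance: let $\gamma_{n,\lambda}$ denote the centered Gaussian on $\R^n$ with covariance $\lambda^2 I$, for a parameter $\lambda > 0$ to be chosen. The change of variables $z = y - x_i$ yields
\[
\gamma_{n,\lambda}(x_i + (t/2)L) = e^{-\|x_i\|^2/(2\lambda^2)} \int_{\|z\|_L \le t/2} e^{-\pr{z}{x_i}/\lambda^2}\, d\gamma_{n,\lambda}(z).
\]
Since $\{z : \|z\|_L \le t/2\}$ is symmetric, substituting $z \mapsto -z$ and averaging replaces $e^{-\pr{z}{x_i}/\lambda^2}$ by $\cosh(\pr{z}{x_i}/\lambda^2) \ge 1$. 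Combined with $\|x_i\|_2 \le 1$, this gives
\[
\gamma_{n,\lambda}(x_i + (t/2)L) \ge e^{-1/(2\lambda^2)} \cdot \Pr_{g \sim \gamma_n}\!\bigl(\|g\|_L \le t/(2\lambda)\bigr).
\]

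To finish, I would pick $\lambda$ so that the right-hand probability is at least $1/2$ via Markov's inequality. Cauchy-Schwarz gives $\E[\|g\|_L] \le \ell_L(I)$, so Markov yields $\Pr(\|g\|_L > t/(2\lambda)) \le 2\lambda \ell_L(I)/t$, and the choice $\lambda = t/(4\ell_L(I))$ makes this at most $1/2$. Then $1/(2\lambda^2) = 8\, \ell_L(I)^2/t^2$, and disjointness of the translates yields $\sum_i \gamma_{n,\lambda}(x_i + (t/2)L) \le 1$, so
\[
N \le 2 \, e^{8 \ell_L(I)^2/t^2},
\]
which is the claimed bound after absorbing constants into $C$.

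The main subtlety is the choice of $\lambda$: with the standard Gaussian one would instead need a small-ball probability estimate for $\|g\|_L$, which is delicate in general. Rescaling the covariance lets Markov's inequality control the probability at the scale $t/(2\lambda)$ chosen to match the geometry of $L$, and the cost $e^{-\|x_i\|^2/(2\lambda^2)}$ is harmless because the centers lie in the unit ball $B_2^n$.
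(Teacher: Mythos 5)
Your argument is correct and is precisely the classical Pajor--Tomczak-Jaegermann proof of the dual Sudakov inequality (maximal $t$-separated set, shifted Gaussian with tuned covariance $\lambda^2 I$, Markov at scale $t/(2\lambda)$), which is the same proof underlying the reference the paper cites for this lemma, so there is no methodological difference to report. The one detail you leave implicit---passing from $N \le 2e^{8\ell_L(I)^2/t^2}$ to $N \le e^{C\ell_L(I)^2/t^2}$---is indeed routine: if $N(AB_2^n, tK) \ge 2$ then $tL \not\supseteq B_2^n$, so some unit vector $v$ has $\|v\|_L > t$, and Jensen applied to the component of $g$ orthogonal to $v$ gives $\ell_L(I) \ge \E\|g\|_L \ge \E|\langle g, v\rangle|\,\|v\|_L \ge \sqrt{2/\pi}\, t$, so the factor $2$ is absorbed by enlarging $C$; otherwise $N = 1$ and the bound is trivial.
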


The following lemma gives a simple containment relationship (see e.g., \cite{DV12}).
\begin{lemma}\label{lem:ell-containment}
For any $A \in \R^{n \times n}$, $A$ invertible, we have that
\[
\frac{1}{\ell_{K^*}(A^{-1})}K \subseteq AB_2^n \subseteq \ell_K(A) K
\]
\end{lemma}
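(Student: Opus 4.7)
The plan is to derive both inclusions from a single Gaussian symmetrization argument, with polar duality bridging the two sides.

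For the outer inclusion $AB_2^n \subseteq \ell_K(A) K$, it suffices by homogeneity to show $\|Au\|_K \le \ell_K(A)$ for every unit vector $u \in \sph^{n-1}$. I would fix such a $u$ and decompose a standard Gaussian as $g = \pr{g}{u} u + g_{\perp}$, where $g_{\perp} \perp u$ is independent of $\pr{g}{u} \sim N(0,1)$. The reflection $g \mapsto g - 2\pr{g}{u} u$ preserves $\gamma_n$, so symmetrizing $\E\|Ag\|_K^2$ over this reflection gives
\[
\ell_K(A)^2 \;=\; \tfrac12 \E\bigl[\|\pr{g}{u} Au + Ag_{\perp}\|_K^2 + \|-\pr{g}{u} Au + Ag_{\perp}\|_K^2\bigr].
\]
The triangle inequality in the form $2\|a\|_K \le \|a+b\|_K + \|a-b\|_K$ squares to $\|a+b\|_K^2 + \|a-b\|_K^2 \ge 2\|a\|_K^2$; applying this with $a = \pr{g}{u} Au$ and $b = Ag_{\perp}$ lower bounds the right hand side by $\E[\pr{g}{u}^2]\,\|Au\|_K^2 = \|Au\|_K^2$, yielding $\|Au\|_K \le \ell_K(A)$ as desired.

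For the inner inclusion, I would argue by polar duality. Since $B_2^n$ is self-polar and polarity reverses inclusions with $(TC)^* = T^{-T} C^*$ and $(\lambda C)^* = \lambda^{-1} C^*$, the claimed statement is equivalent to a containment of the form $A^{-T} B_2^n \subseteq \ell_{K^*}(A^{-1}) K^*$ between polars. This is exactly the outer inclusion applied with $A$ replaced by the appropriate inverse/transpose and $K$ replaced by the symmetric body $K^*$ (whose gauge $\|\cdot\|_{K^*}$ is the support function of $K$). Alternatively, one can bypass polars entirely with the direct one-liner: for $y \in K$, the defining inequality $\pr{y}{v} \le \|v\|_{K^*}$ applied at $v = A^{-1}g$ and symmetrized via $g \sim -g$ yields $|\pr{A^{-T}y}{g}| \le \|A^{-1}g\|_{K^*}$ pointwise; squaring and integrating against $\gamma_n$ gives $\|A^{-T} y\|_2^2 = \E\,\pr{A^{-T}y}{g}^2 \le \E\,\|A^{-1}g\|_{K^*}^2 = \ell_{K^*}(A^{-1})^2$, which is the inner inclusion once one translates back.

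The main obstacle is nothing conceptually deep — it is the bookkeeping of transposes under the polarity step, which relies on $B_2^n$ and $\gamma_n$ being rotation invariant. Once the convention for $\ell_K(\cdot)$ is pinned down, both halves of the lemma follow directly from the short Gaussian symmetrization above, with no further convex-geometric input required.
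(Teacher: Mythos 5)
Your proof of the outer inclusion $AB_2^n \subseteq \ell_K(A)K$ is correct and takes a genuinely different route from the paper's: the paper picks, for a putative violator $x=Au$, a maximizing $y \in K^*$ and compares $\|A^Ty\|_2$ with $\ell_K(A)$ via $|\pr{y}{Ag}| \le \|Ag\|_K$, whereas you symmetrize over the reflection fixing $u^{\perp}$ and use $\|a+b\|_K^2+\|a-b\|_K^2 \ge 2\|a\|_K^2$. Both arguments are short; both use the symmetry of $K$, which is the standing assumption of this section.

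The inner inclusion is where there is a genuine gap, located exactly at ``which is the inner inclusion once one translates back'' and at the claim that the transpose bookkeeping is absorbed by rotation invariance of $B_2^n$ and $\gamma_n$. Your one-liner proves $\|A^{-T}y\|_2 \le \ell_{K^*}(A^{-1})$ for every $y \in K$, which translates to $\frac{1}{\ell_{K^*}(A^{-1})}K \subseteq A^TB_2^n$; equivalently, carrying out your polarity reduction carefully, the outer inclusion applied to the pair $(K^*,A^{-T})$ gives $\frac{1}{\ell_{K^*}(A^{-T})}K \subseteq AB_2^n$. Rotation invariance does not let you trade $A^TB_2^n$ for $AB_2^n$ (these are the ellipsoids of $(A^TA)^{1/2}$ and $(AA^T)^{1/2}$), nor $\ell_{K^*}(A^{-1})$ for $\ell_{K^*}(A^{-T})$: the $\ell$-norm is unchanged under right multiplication by an orthogonal matrix, and $A^{-T}$ is generally not of the form $A^{-1}U$ with $U$ orthogonal. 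In fact the lemma as literally stated fails for non-symmetric $A$: take $n=2$, $K=DB_2^2$ with $D=\diag(t,1)$, so that $\|x\|_{K^*}=\|Dx\|_2$ and $\ell_{K^*}(A^{-1})=\|DA^{-1}\|_F$, and let $A^{-1}$ be the matrix with rows $(0,\,1/t)$ and $(1,\,1)$; then $\ell_{K^*}(A^{-1})=\sqrt{3}$, while $A^{-1}K$ contains $A^{-1}De_1=(0,t)^T$, so $\frac{1}{\ell_{K^*}(A^{-1})}K \not\subseteq AB_2^2$ once $t\ge 2$. To be fair, the paper's own proof makes the identical elision---it dualizes $A^{-1}B_2^n \subseteq \ell_{K^*}(A^{-1})K^*$ and writes $(A^{-1}B_2^n)^*=AB_2^n$, which requires $A=A^T$---and in the only place the lemma is used, $A$ is a positive semidefinite solution of the convex program (\ref{sdp}), so nothing downstream is affected. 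Your argument becomes complete in exactly the same way the paper's does: either add the hypothesis $A=A^T$, or state the inner inclusion with $\ell_{K^*}(A^{-T})$ in place of $\ell_{K^*}(A^{-1})$.
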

\begin{proof}
We first show that $E=AB_2^n \subseteq \ell_K(A) K$. Assume not, then there exists $x \in E$ such that $\|x\|_K = \sup_{y \in K^*} |\pr{y}{x}| > \ell_K(A)$. Now pick
$y \in K^*$ achieving $|\pr{y}{x}| = \|x\|_K$. Then we have that
\[
\ell_K(A) < |\pr{x}{y}| \leq \sup_{z \in AB_2^n} |\pr{z}{y}| = \sup_{z \in B_2^n} |\pr{z}{A^ty}|  = \|A^ty\|_2
\]
But now note that
\[
\ell_K(A) = \E[\|AX\|_K^2]^{\frac{1}{2}} \geq \E[|\pr{y}{AX}|^2]^{\frac{1}{2}} = \|A^ty\|_2
\]
a clear contradiction. Therefore $AB_2^n \subseteq \ell_K(A) K$ as needed. Now applying the same argument on $E^* = A^{-1}B_2^n$ and $K^*$, we get that
$E^* \subseteq \ell_K(A^{-1})K^*$. From here via duality, we get that
\[
\frac{1}{\ell_{K^*}(A^{-1})}K = (\ell_{K^*}(A^{-1})K^*)^* \subseteq  (A^{-1}B_2^n)^* = AB_2^n
\]
as needed.
\end{proof}

\section{Algorithm for computing an M-ellipsoid}

In this section, we present the algorithm for computing an M-ellipsoid of an arbitrary convex body given in the oracle model. We first observe that it suffices
to give an algorithm for centrally symmetric $K$. For a general convex body $K$, we may replace $K$ by the difference body $K-K$ (which is symmetric). An
$M$-ellipsoid for $K-K$ remains one for $K$, as the covering estimates changes by at most a $2^{O(n)}$ factor. To see this, note that for any ellipsoid $E$
we have that $N(K,E) \leq N(K-K,E)$ and that
\[
N(E,K) \leq N(E,K-K)N(K-K,K) \leq N(E,K-K)2^{O(n)},
\]
where the last inequality follows from the Rogers-Shephard inequality~\cite{RS57}, i.e. $\vol(K-K) \leq 4^n \vol(K)$.

Our algorithm has two main components: a subroutine to compute an approximate Lewis ellipsoid for a norm given by a convex body, and an implementation of
the iteration that makes this ellipsoid converge to an M-ellipsoid of the original convex body.

\subsection{Approximating the $\ell$-norm}

Our approximation of the $\ell_K$ norm is as follows:
\[
\tilde{\ell}_K(A)=\sum_{x \in \{-1,1\}^n} \frac{1}{2^n} \|Ax\|_K.
\]

The next lemma is essentially folklore, we give a known proof here.
\begin{lemma}\label{easy-ell-approx}
For a symmetric convex body $K$ and any $A \in \R^{n \times n}$, we have
\[
\ell_K(A) \le 4\sqrt{\frac{\pi}{2}}(1+\log d_{BM}(K,B_2^n)) \tilde{\ell}_K(A).
\]
\end{lemma}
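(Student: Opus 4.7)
The plan is to bridge the Gaussian $L^2$-average $\ell_K(A)$ and the Rademacher $L^1$-average $\tilde{\ell}_K(A)$ in two steps, which account respectively for the factors $\sqrt{\pi/2}$ and $4(1+\log d_{BM}(K,B_2^n))$ on the right-hand side. Both steps are classical for normed spaces, and the lemma is essentially a composition of the two.

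\emph{Step 1 (Gaussian $L^2$ versus Gaussian $L^1$).} I would apply a Kahane--Khintchine-style reverse Jensen inequality to the Gaussian seminorm $g \mapsto \|Ag\|_K$ to obtain
\[
\ell_K(A) = \E_g[\|Ag\|_K^2]^{1/2} \le \sqrt{\pi/2}\,\E_g[\|Ag\|_K].
\]
The constant $\sqrt{\pi/2}$ is the one-dimensional extremal ratio $\E[g_1^2]^{1/2}/\E[|g_1|]$, achieved when $\|\cdot\|_K$ restricts to a one-dimensional subspace; the bound for general Gaussian seminorms follows from standard Gaussian concentration / hypercontractivity arguments.

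\emph{Step 2 (Gaussian $L^1$ versus Rademacher $L^1$).} Next I would invoke Pisier's K-convexity theorem applied to the Banach space $X_K = (\R^n, \|\cdot\|_K)$, giving
\[
\E_g[\|Ag\|_K] \le 4(1+\log d_{BM}(K,B_2^n))\,\E_\varepsilon[\|A\varepsilon\|_K] = 4(1+\log d_{BM}(K,B_2^n))\,\tilde{\ell}_K(A).
\]
The K-convexity constant of $X_K$ is $O(1+\log d_{BM}(K,B_2^n))$, and this packaged estimate is essentially the same content as Lemma \ref{lem:K-convexity-bound}, just cast as a comparison of Gaussian and Rademacher averages rather than as a comparison of matrix norms; the two formulations are interchangeable by a standard duality computation. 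Multiplying the two displayed inequalities yields the stated bound. The bulk of the difficulty sits in Step 2---Pisier's K-convexity theorem is nontrivial---but in the present context this deep ingredient is already captured by Lemma \ref{lem:K-convexity-bound}, so the proof reduces to invoking that lemma together with the routine Gaussian Kahane--Khintchine bound of Step 1.
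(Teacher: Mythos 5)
Your two-step decomposition is a genuinely different route from the paper's proof (which dualizes $\ell_K$ via Lemma \ref{lem:K-convexity-bound}, replaces the Gaussian constraint on the dual side by a Rademacher one through the elementary convex comparison $\E[f(u_1,\dots,u_n)]\le\E[f(\sqrt{\pi/2}\,g_1,\dots,\sqrt{\pi/2}\,g_n)]$, and finishes with Cauchy--Schwarz), but as written both of your steps have gaps, and together they do not establish the stated constant. In Step 1, the inequality $\E_g[\|Ag\|_K^2]^{1/2}\le\sqrt{\pi/2}\,\E_g[\|Ag\|_K]$ with the one-dimensional extremal constant does not follow from ``standard Gaussian concentration / hypercontractivity.'' The Gaussian Poincar\'e inequality gives $\Var(\|Ag\|_K)\le\sigma^2$ with $\sigma=\sup_{y\in K^*}\|A^Ty\|_2\le\sqrt{\pi/2}\,\E\|Ag\|_K$, hence only $\E[\|Ag\|_K^2]^{1/2}\le\sqrt{1+\pi/2}\,\E\|Ag\|_K$, a strictly worse constant; the assertion that the one-dimensional case is extremal for the $L^2/L^1$ ratio of a norm of a Gaussian vector is a genuinely nontrivial fact (it can be deduced, e.g., from the Lata{\l}a--Oleszkiewicz S-inequality), not a routine concentration estimate. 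Note that in the paper the factor $\sqrt{\pi/2}$ enters in the opposite, easy (Jensen) direction --- bounding a Rademacher average by a Gaussian one --- which is exactly why no sharp reverse H\"older inequality is needed there.

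In Step 2, the comparison $\E_g\|Ag\|_K\le 4(1+\log d_{BM}(K,B_2^n))\,\tilde{\ell}_K(A)$ is where the substance of the lemma lies, and declaring it ``interchangeable with Lemma \ref{lem:K-convexity-bound} by a standard duality computation'' is not a proof. Lemma \ref{lem:K-convexity-bound} compares $\ell_{K^*}(A^T)$ with the dual norm $\ell_K^*(A)$; turning that into a Gaussian-versus-Rademacher comparison of averages of $\|\cdot\|_K$ requires precisely the dualization, the Gaussian/Rademacher swap in the constraint, and the weak-duality step that constitute the paper's entire argument, and carried out that way the universal factor $\sqrt{\pi/2}$ arises in this conversion (not in your Step 1), so you cannot also claim the bare constant $4(1+\log d_{BM}(K,B_2^n))$ for the $L^1$--$L^1$ comparison without additional argument. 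If you are content with an unspecified absolute constant --- which is all the paper ultimately needs, cf.\ Lemma \ref{lem:ell-approx} --- your decomposition can be repaired (Poincar\'e for Step 1, and the known K-convexity consequence for Step 2), but as written it neither justifies the constant $4\sqrt{\pi/2}(1+\log d_{BM}(K,B_2^n))$ nor supplies the proof of its main step.
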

\begin{proof}
Let $g_1,\dots,g_n$ denote i.i.d. $N(0,1)$ Gaussians, let $u_1,\dots,u_n$ denote i.i.d. uniform $\set{-1,1}$ random variables and let $A_1,\dots,A_n \in \R^n$ denote
the columns of $A$. Then we have that
\begin{align*}
\ell_K(A) &\leq 4(1+\log d_{BM}(K,B_2^n)) \sup \left\{\sum_i \pr{A_i}{y_i}: \E[\|\sum_i g_iy_i\|_{K^*}^2]^{\frac{1}{2}} \leq 1 \right\} \\
          &\leq 4\sqrt{\frac{\pi}{2}} (1+\log d_{BM}(K,B_2^n)) ~ \sup \left\{\sum_i \pr{A_i}{y_i}: \E[\|\sum_i u_iy_i\|_{K^*}^2]^{\frac{1}{2}} \leq 1\right\} \\
          &\leq 4\sqrt{\frac{\pi}{2}} (1+\log d_{BM}(K,B_2^n)) ~ \E[\|\sum_i u_iA_i\|_K^2]^{\frac{1}{2}} =  4\sqrt{\frac{\pi}{2}} (1+\log d_{BM}(K,B_2^n)) ~ \tilde{\ell}_K(A)
\end{align*}
Here, the first inequality follows by Lemma \ref{lem:K-convexity-bound}. The next inequality follows from the classical comparison
$\E[f(u_1,\dots,u_n)] \leq \E[f(\sqrt{\frac{\pi}{2}}g_1,\dots,\sqrt{\frac{\pi}{2}}g_n)]$ for any convex function $f:\R^n \rightarrow \R$, and setting $f(x_1,\dots,x_n)
= \|\sum_i x_iy_i\|_C^2$. The last inequality follows from the following weak duality relation:
\begin{align*}
\sum_i \pr{A_i}{y_i} &= \E[\pr{\sum_i u_iA_i}{\sum_j u_jy_j}] \leq \E[\|\sum_i u_iA_i\|_K \|\sum_j u_jy_j\|_{K^*}] \\
                     &\leq \E[\|\sum_i u_iA_i\|_K^2]^{\frac{1}{2}} \E[\|\sum_j y_ju_j\|_{K^*}^2]^{\frac{1}{2}} \leq \ell_K(A) \text{.}
\end{align*}
\end{proof}

The next lemma is a strengthening due to Pisier, using Proposition 8 from \cite{Pisier81}. While it is not critical 
for our results (the difference is only in absolute constants), we use this stronger bound in our analysis.
\begin{lemma}\label{lem:ell-approx}
For a symmetric convex body $K$ and any $A \in \R^{n \times n}$, we have
\[
\frac{1}{\sqrt{\frac{\pi}{2}}} \tilde{\ell}_K(A) \leq \ell_K(A) \le c_1 \tilde{\ell}_K(A) \sqrt{1+\log d_{BM}(K,B_2^n)}
\]
where $c_0,c_1$ are absolute constants. Furthermore, by duality, we get that
\[
\frac{1}{c_1 \sqrt{1+\log d_{BM}(K,B_2^n)}} \tilde{\ell}^*_K(A) \leq \ell_K^*(A) \leq  \sqrt{\frac{\pi}{2}} \tilde{\ell}^*(A) \text{.}
\]
\end{lemma}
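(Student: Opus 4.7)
The plan is to obtain both inequalities by direct comparison arguments, treating the upper bound as a sharpening of Lemma~\ref{easy-ell-approx} and the lower bound as a simple Jensen-type estimate. The dual statement is then immediate from matrix-norm duality.

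\medskip

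\noindent\textbf{Lower bound.} First I would establish $\tilde{\ell}_K(A) \le \sqrt{\pi/2}\, \ell_K(A)$. Let $g = (g_1,\dots,g_n)$ be standard Gaussian and $u = (u_1,\dots,u_n)$ uniform on $\{-1,1\}^n$. Applying the comparison $\E[F(u)] \le \E[F(\sqrt{\pi/2}\, g)]$ to the convex function $F(x) = \|Ax\|_K$ (the same comparison used in the proof of Lemma~\ref{easy-ell-approx}) gives $\E[\|Au\|_K] \le \sqrt{\pi/2}\, \E[\|Ag\|_K]$, and Jensen's inequality yields $\E[\|Ag\|_K] \le \E[\|Ag\|_K^2]^{1/2} = \ell_K(A)$. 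Combining these gives the claimed lower bound on $\ell_K(A)$.

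\medskip

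\noindent\textbf{Upper bound.} The plan is to repeat the three-step argument of Lemma~\ref{easy-ell-approx} essentially verbatim, replacing the only input that produced the $(1 + \log d_{BM}(K, B_2^n))$ factor, namely Lemma~\ref{lem:K-convexity-bound}, by Pisier's sharper K-convexity estimate from Proposition~8 of \cite{Pisier81}, which asserts
\[
\ell_{K^*}(A^T) \;\le\; c \sqrt{1 + \log d_{BM}(K, B_2^n)}\;\ell_K^*(A)
\]
for an absolute constant $c$. Feeding this stronger inequality into the first line of the proof of Lemma~\ref{easy-ell-approx}, the rest of the chain (the Rademacher-Gaussian comparison on the dual side, followed by weak duality with $K, K^*$ and Cauchy–Schwarz in the probability space) is unchanged and produces a bound of the form $\ell_K(A) \le C \sqrt{1 + \log d_{BM}(K, B_2^n)}\, \E[\|Au\|_K^2]^{1/2}$. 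To pass from the $L^2(u)$ quantity to $\tilde{\ell}_K(A) = \E[\|Au\|_K]$, I would invoke Kahane's inequality for norms of Rademacher sums, which gives $\E[\|Au\|_K^2]^{1/2} \le K_0\, \E[\|Au\|_K]$ for an absolute constant $K_0$. Absorbing $C, K_0, c$ into a single constant $c_1$ yields the stated upper bound.

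\medskip

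\noindent\textbf{Dual inequalities.} The displayed dual bounds are immediate from the primal ones by matrix-norm duality. Since the primal bounds read $\tfrac{1}{\sqrt{\pi/2}}\tilde{\ell}_K \le \ell_K \le c_1 \sqrt{1+\log d_{BM}(K,B_2^n)}\, \tilde{\ell}_K$ pointwise on $\R^{n\times n}$, and since $\alpha \le \beta$ as norms implies $\beta^* \le \alpha^*$, I would simply pass to dual norms to obtain $\frac{1}{c_1 \sqrt{1+\log d_{BM}(K, B_2^n)}} \tilde{\ell}^*_K \le \ell_K^* \le \sqrt{\pi/2}\, \tilde{\ell}^*_K$.

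\medskip

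\noindent\textbf{Main obstacle.} The only nontrivial input is Pisier's strengthening of the K-convexity constant from $\log$ to $\sqrt{\log}$, which is cited rather than reproved. Every other step is an honest repetition of the argument already in the proof of Lemma~\ref{easy-ell-approx}, with Kahane's inequality absorbing the transition from the $L^2$ Rademacher norm to the $L^1$ Rademacher norm into the absolute constant $c_1$.
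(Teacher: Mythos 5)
Your proposal is correct, but note that the paper itself supplies no proof of Lemma~\ref{lem:ell-approx}: it is stated as a strengthening due to Pisier, citing Proposition~8 of \cite{Pisier81}, so there is no in-paper argument to match against. Your reconstruction is the natural one and is sound. The lower bound is exactly the standard Rademacher--Gaussian comparison (condition on the signs of the Gaussians and apply Jensen) followed by $\E\|Ag\|_K \le (\E\|Ag\|_K^2)^{1/2}$, giving $\tilde{\ell}_K(A) \le \sqrt{\pi/2}\,\ell_K(A)$. For the upper bound, rerunning the three-step chain from the proof of Lemma~\ref{easy-ell-approx} with the K-convexity input of Lemma~\ref{lem:K-convexity-bound} replaced by a $c\sqrt{1+\log d_{BM}(K,B_2^n)}$ bound does yield $\ell_K(A) \le C\sqrt{1+\log d_{BM}(K,B_2^n)}\,\bigl(\E\|Au\|_K^2\bigr)^{1/2}$, and your insertion of Kahane's inequality to pass from the $L^2$ to the $L^1$ Rademacher average is a genuine improvement in rigor: the paper's own proof of Lemma~\ref{easy-ell-approx} silently identifies $\bigl(\E\|Au\|_K^2\bigr)^{1/2}$ with $\tilde{\ell}_K(A)$, which is literally the $L^1$ average, so your Kahane step closes that gap (in the direction needed for the upper bound it is the nontrivial direction of Kahane, which is fine). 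The dual inequalities follow, as you say, from the pointwise norm comparison together with the order-reversing property of trace duality as defined in~\eqref{eq:dual-norm}. The one caveat is that the single unproved ingredient---that Pisier's Proposition~8 yields the K-convexity estimate with $\sqrt{1+\log d_{BM}(K,B_2^n)}$ in place of $1+\log d_{BM}(K,B_2^n)$---is cited rather than verified; since the paper rests the entire lemma on the same citation (and the form you use is the only one consistent with the lemma's conclusion), this is an acceptable black box, but you should flag that its exact statement should be checked against \cite{Pisier81} before relying on the constant.
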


\subsection{A convex program}

To compute the approximate $\ell$-ellipsoid we use the following convex program:
\begin{align}\label{sdp}
\begin{split}
\max \det(A)^{\frac{1}{n}} & \\
\text{s.t.} & \\
            \quad A &\succeq 0 \\
            \quad \tilde{\ell}_K(A) & \leq 1
\end{split}
\end{align}

Here the main thing we change is that we replace the $\ell$-norm with $\tilde{\ell}_K$. This will suffice for
our purposes. We optimize over only positive semidefinite matrices (unlike Lewis' program \ref{Lewis-prog}).
This enables us to ensure convexity of
program while maintaining the desired properties for the optimal solution. For convenience we use $\det(.)^{1/n}$
as the objective function and clearly this makes no essential difference.

\subsection{Main algorithm}

\begin{figure}[ht]
\fbox{\parbox{\textwidth}{

\noindent
{\bf M-ellipsoid.}
\begin{enumerate}
\item Let $K_1 = K$ and $T = \log^* n$
\item For $i=1 \ldots T-1$,
\begin{enumerate}
\item Compute an approximate $\ell$-ellipsoid of $K_i$ using the convex program (\ref{sdp}) to get an approximately
optimal transformation $A_i$ (the corresponding ellipsoid is $A_iB_2^n$).
\item Set
\[
r_{in} =  \frac{\sqrt{n}}{\log^{(i)}(n) \tilde{\ell}_{K_i}(A_i)} \mbox{ and }
r_{out} = \log^{(i)}(n)\frac{\tilde{\ell}_{K_i^*}(A_i^{-1})}{\sqrt{n}}.
\]
\item Define
\[
K_{i+1} = \mbox{conv}\{K_i \cap r_{out} A_iB_2^n, r_{in} A_iB_2^n \}.
\]
\end{enumerate}
\item Output $E = \frac{\sqrt{n}}{\tilde{\ell}_{K_{T-1}}(A_{T-1})}A_{T-1} B_2^n$ as the M-ellipsoid.
\end{enumerate}
}}
\caption{The M-Ellipsoid Algorithm}
\label{fig:algo}
\end{figure}

Given a convex body $K$, we put it in approximate John position using the Ellipsoid algorithm in polynomial time \cite{GLS},
so that $B_2^n \subseteq K \subseteq nB_2^n$. We then use the above procedure, which is essentially an algorithmic version of Milman's proof of the existence of
$M$-ellipsoids. In the description below, by $\log^{(i)}n$ we mean the $i$'th iterated logarithm, i.e., $\log^{(1)} n = 1, \log^{(2)} n = \log\log n$ and so on.

\section{Analysis}

We note that the time complexity of the algorithm is bounded by $\mbox{poly}(n)2^{O(n)}$ and the space complexity is polynomial in $n$. In fact, the only step
that takes exponential time is the evaluation of the $\ell$-norm constraint of the SDP. This evaluation happens a polynomial number of times.
The rest of computation involves applying the ellipsoid algorithm and computing oracles for successive bodies (for $K_{i+1}$ given an oracle
for $K_i$), both of which are fairly straightforward \cite{GLS}. In particular, we build an oracle
for the intersection of two convex bodies given by oracles and for the convex hull of two convex bodies given by oracles.
The oracle for a body consists of a membership test and a bound on the ratio between two balls that sandwich the body.
Our analysis below provides sandwiching bounds and the complexity of the oracle grows as $n^{O(i)}$ in the $i$'th iteration,
for a maximum of $n^{O(\log^* n)} = 2^{o(n)}$.

We begin by showing that Lewis's bound (Theorem \ref{thm:Lewis}) is robust to approximation
and works when restricted to positive semi-definite transformations.
This allows us to establish the desired properties for approximate optimizers of the convex program (\ref{sdp}).

\begin{lemma}\label{lem:approx-sdp}
Let $K$ be such that $B_2^n \subseteq K \subseteq nB_2^n$ and
$A \in \R^{n \times n}$, be a $(1-\eps)$-approximate optimizer for the convex program (\ref{sdp}),
i.e. $\det(A)^{\frac{1}{n}} \geq (1-\eps) OPT$. Then for $\eps \leq 1/36n^4$, we have that
\[
\tilde{\ell}_K(A)\tilde{\ell}_K^*(A^{-1}) \le n(1+ 6n^2\sqrt{\eps}) \le 2n.
\]
\end{lemma}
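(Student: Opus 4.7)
The proof follows the variational template of Lewis's classical theorem (Theorem~\ref{thm:Lewis}), adapted to two new features of program~(\ref{sdp}): the PSD restriction and the approximation $\det(A)^{1/n} \ge (1-\eps) OPT$. I would first normalize $A$ so that $\tilde{\ell}_K(A) = 1$ (legitimate since the target inequality is degree-zero homogeneous in $A$); the approximate-optimality relation becomes $(1-\eps) OPT \le \det(A)^{1/n} \le OPT$, and $A \succ 0$ (else the determinant vanishes).

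The core step is a perturbation argument. For any symmetric matrix $B$ and $t > 0$ small enough that $A + tB \succeq 0$, the rescaled matrix $(A+tB)/\tilde{\ell}_K(A+tB)$ is PSD with $\tilde{\ell}_K$-norm $1$ and so feasible; approximate optimality combined with $\tilde{\ell}_K(A+tB) \le 1 + t\,\tilde{\ell}_K(B)$ yields
\[
(1 - \eps)\left(\det(A + tB)/\det(A)\right)^{1/n} \le 1 + t\,\tilde{\ell}_K(B).
\]
Setting $M = A^{-1/2} B A^{-1/2}$ (symmetric, with real eigenvalues $\mu_i$), the eigenvalue-wise inequality $\log(1+x) \ge x - x^2$ for $|x|\le 1/2$ gives $\det(I+tM)^{1/n} = \prod_i(1+t\mu_i)^{1/n} \ge 1 + t\,\tr(M)/n - t^2\,\|M\|_F^2/n$. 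Substituting, rearranging, and optimizing at $t = \sqrt{n\eps/((1-\eps)\|M\|_F^2)}$ produces
\[
\tr(A^{-1} B) \le \frac{n\,\tilde{\ell}_K(B)}{1-\eps} + 2\sqrt{\frac{n\eps}{1-\eps}}\,\|M\|_F
\]
for every symmetric $B$ (in the $\eps = 0$ limit this is the usual Lewis identity).

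To convert this into a bound on the (non-symmetric) dual $\tilde{\ell}_K^*(A^{-1})$, I would symmetrize: for any $R$, the symmetry of $A^{-1}$ gives $\tr(A^{-1}R) = \tr(A^{-1}R^s)$ with $R^s = (R+R^T)/2$. Applying the variational inequality with $B = R^s$ and supremizing over $\tilde{\ell}_K(R)\le 1$ yields
\[
\tilde{\ell}_K^*(A^{-1}) \le \frac{n}{1-\eps}\sup_{\tilde{\ell}_K(R)\le 1}\tilde{\ell}_K(R^s) + 2\sqrt{\frac{n\eps}{1-\eps}}\sup_{\tilde{\ell}_K(R)\le 1}\|A^{-1/2}R^s A^{-1/2}\|_F.
\]
The sandwich hypothesis $B_2^n \subseteq K \subseteq nB_2^n$ converts both suprema into $\poly(n)$ bounds: on this range of bodies $\tilde{\ell}_K$ and the Frobenius norm are comparable up to factors of $n$, so $\|R^s\|_F \le O(n)\,\tilde{\ell}_K(R)$; and combining $\tilde{\ell}_K(A) = 1$ with the near-optimality of $\det(A)^{1/n}$ forces $\lambda_{\min}(A) \ge 1/\poly(n)$, hence $\|A^{-1}\|_{\mathrm{op}} \le \poly(n)$. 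Under $\eps \le 1/(36n^4)$, the factor $1/(1-\eps)$ differs from $1$ by $O(\eps)$ (absorbable), and the error term is at most $n \cdot O(n^2\sqrt{\eps})$, delivering the target $n(1+6n^2\sqrt{\eps})$.

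The step I expect to be most delicate is the symmetrization: controlling $\sup_{\tilde{\ell}_K(R)\le 1}\tilde{\ell}_K(R^s)$ requires a transposition comparison $\tilde{\ell}_K(R^T) \le O(1)\,\tilde{\ell}_K(R)$, which is nontrivial because $\tilde{\ell}_K$ is not symmetric under transposition in general. The bound should follow from combining the $K$-convexity inequality (Lemma~\ref{lem:K-convexity-bound}) with the $\tilde{\ell}_K \asymp \ell_K$ equivalence (Lemma~\ref{lem:ell-approx}), together with $d_{BM}(K,B_2^n) \le n$ provided by the sandwich; any resulting $O(\sqrt{\log n})$ factor is harmlessly absorbed into the $\le 2n$ slack in the lemma's statement.
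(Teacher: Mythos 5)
Your core variational step is sound and is in the same spirit as the paper's: you run Lewis's perturbation argument once, directly at the approximate optimizer $A$, optimizing over the step size $t$, whereas the paper first derives the exact inequality $\tr(A_{OPT}^{-1}T)\le n$ at the true optimizer and then transfers it to $A$ via the comparison $A \succeq (1-3\sqrt{n\eps})A_{OPT}$. The genuine gap is the symmetrization step. Because you only perturb in symmetric directions $B$, your final bound carries the factor $\sup\{\tilde{\ell}_K(R^s): \tilde{\ell}_K(R)\le 1\}$, and this supremum is \emph{not} $1+o(1)$, nor even $O(\sqrt{\log n})$: transposition can inflate $\tilde{\ell}_K$ by a factor of order $\sqrt{n}$. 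Concretely, take $K=[-1,1]^n$ (which satisfies $B_2^n\subseteq K\subseteq nB_2^n$), $u=(1,\dots,1)/\sqrt{n}$ and $R=u e_1^T$: then $\tilde{\ell}_K(R)$ is of order $1/\sqrt{n}$ while $\tilde{\ell}_K(R^T)$ is of order $1$, so $\tilde{\ell}_K(R^s)\ge \tfrac12\bigl(\tilde{\ell}_K(R^T)-\tilde{\ell}_K(R)\bigr) \gtrsim \sqrt{n}\,\tilde{\ell}_K(R)$. With that loss your first term becomes $\Omega(n^{3/2})$ rather than $n(1+O(n^2\sqrt{\eps}))$. Moreover, your fallback remark that an extra $O(\sqrt{\log n})$ factor ``is harmlessly absorbed into the $\le 2n$ slack'' is not correct: the conclusion $\tilde{\ell}_K(A)\tilde{\ell}_K^*(A^{-1})\le n(1+6n^2\sqrt{\eps})\le 2n$ is tight up to a factor $2$ in the leading term, and Theorem \ref{thm:approx-Lewis} accounts for all $d_{BM}$-dependent losses separately on top of this $2n$, so no $\polylog$ slack is available here.

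The paper avoids this issue by taking the test direction $T$ completely arbitrary (not symmetric, not PSD) and perturbing the positive definite optimizer by $\delta T$ for small $\delta>0$, thereby bounding $\tr(A_{OPT}^{-1}T)\le n$ for \emph{every} $T$ with $\tilde{\ell}_K(T)\le 1$, with no symmetrization; the approximate optimizer is then handled via $\|A^{-1}-A_{OPT}^{-1}\|_F\le 6n^2\sqrt{\eps}$, which uses $\|A_{OPT}^{-1}\|_2\le n$ and the PSD comparison above. If you want to keep your one-shot argument at $A$, do the same: for $t$ small the matrix $A+tT$ still has nonnegative quadratic form (which is all the constraint is needed for), so you can bound $\tr(A^{-1}T)$ for general $T$ directly instead of passing through $\tilde{\ell}_K(T^s)$. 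A second, smaller point: your justification of $\|A^{-1}\|_2\le\poly(n)$ (``near-optimality of $\det$ plus $\tilde{\ell}_K(A)=1$ forces $\lambda_{\min}(A)\ge 1/\poly(n)$'') is not a proof as stated --- the determinant together with $\|A\|_F\le n$ only yields an exponentially small lower bound on $\lambda_{\min}(A)$. It is, however, immediately repaired inside your own framework: apply your variational inequality with the rank-one direction $B=vv^T$, $v$ the top unit eigenvector of $A^{-1}$, which gives $\|A^{-1}\|_2 \le \frac{n}{1-\eps} + 2\sqrt{n\eps/(1-\eps)}\,\|A^{-1}\|_2$ and hence $\|A^{-1}\|_2\le 2n$ for $\eps\le 1/(36n^4)$; this mirrors the paper's claim $\|A_{OPT}^{-1}\|_2\le n$.
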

\begin{proof}
For simplicity of notation, we write $\tilde{\ell}_K(T)$ as $\alpha(T)$ for $T \in \R^{n \times n}$.  Take $T \in \R^{n \times n}$ (not
necessarily positive semidefinite) satisfying $\alpha(T) \leq 1$. Let $\|T\|_F = \sqrt{\sum_{i,j} T_{ij}^2}$ denote the frobenius norm of
$T$, and $\|T\|_2 = \sup_{x \in B_2^n} \|Tx\|_2$ denote the operator norm of $T$.

\paragraph{Claim: $\alpha(T) \le \|T\|_F \leq n\alpha(T)$.}
\begin{proof}
Let $U$ denote a uniform vector in $\set{-1,1}^n$. Since $\frac{1}{n}\|x\|_2 \leq \|x\|_K$ for any $x \in \R^n$, we have that
\[
\alpha(T) = \E[\|UT\|_K^2]^{\frac{1}{2}} \geq \frac{1}{n} \E[\|UT\|_2^2]^{\frac{1}{2}} = \frac{1}{n} \|T\|_F.
\]
Now using the inequality $\|x\|_K \leq \|x\|_2$ for $x \in \R^n$, a similar argument yields $\alpha(T) \leq \|T\|_F$.
\end{proof}

First note that $I_n/\alpha(I_n)$ is a feasible solution to (\ref{sdp}) satisfying
\[
\det(\frac{I_n}{\alpha(I_n)})^{\frac{1}{n}} = \frac{1}{\alpha(I_n)} \geq \frac{1}{\|I_n\|_F} = \frac{1}{\sqrt{n}}.
\]
Let $A_{OPT} \succeq 0$ denote the optimal solution to (\ref{sdp}).
Since $\det(A_{OPT}) \geq \frac{1}{\sqrt{n}}$, we clearly have that $A_{OPT} \succ 0$. Therefore for $\delta > 0$ small enough we have that
$A_{OPT} + \delta T \succeq 0$. From this, we see that $(A_{OPT}+\delta T)/\alpha(A_{OPT}+\delta T)$ is also feasible
for (\ref{sdp}) as $\alpha((A_{OPT}+\delta T)/\alpha(A_{OPT}+\delta T)) = 1$. Since $A_{OPT}$ is the optimal solution, we have that
\[
\det\left(\frac{A_{OPT}+\delta T}{\alpha(A_{OPT}+\delta T)}\right)^{\frac{1}{n}} \leq \det(A_{OPT})^{\frac{1}{n}}.
\]
Rewriting this and using the triangle inequality,
\begin{align*}
\det(A_{OPT}+\delta T)^{\frac{1}{n}} &\leq \det(A_{OPT})^{\frac{1}{n}} \alpha(A_{OPT}+\delta T) \leq \det(A_{OPT})^{\frac{1}{n}} (\alpha(A_{OPT}) + \delta \alpha(T)) \\
                                     &\leq \det(A_{OPT})^{\frac{1}{n}} (1 + \delta).
\end{align*}
Dividing by $\det(A_{OPT})^{\frac{1}{n}}$ on both sides, we get that
\begin{equation}
\det(I_n + \delta A_{OPT}^{-1}T)^{\frac{1}{n}} \leq 1+\delta.
\end{equation}
Since both sides are equal at $\delta = 0$, we must have the same inequality for the derivatives with respect to $\delta$ at $0$.
This yields
\begin{equation}
\label{eq:apr-sdp-1}
\frac{1}{n} \tr(A_{OPT}^{-1}T) \leq 1 \Leftrightarrow \tr(A_{OPT}^{-1}T) \leq n
\end{equation}
Up to this point the proof is essentially the same as Lewis' proof of Theorem \ref{thm:Lewis}. We now depart from
that proof to account for approximately optimal solutions.

\paragraph{Claim: $\|A^{-1}_{OPT}\|_2 \leq n$.}
\begin{proof}
Let $\sigma$ denote the largest eigenvalue of $A_{OPT}^{-1}$ and $v \in \R^n$ be an associated unit eigenvector.
Since $A_{OPT} \succ 0$, we
have that $A_{OPT}^{-1} \succ 0$, and hence $\sigma = \|A^{-1}\|_2$. Now note that $A_{OPT} + \delta v v^T \succ 0$ for any $\delta \geq 0$,
and that $\alpha(v v^T) \leq \|v v^T\|_F = \|v\|_2^2 = 1$. Therefore by Equation \eqref{eq:apr-sdp-1}, we have that
\[
n \geq \tr(A^{-1}(v v^T)) = \tr(\sigma v v^T) = \sigma
\]
as needed.
\end{proof}

\paragraph{Claim: $A^{-1} \preceq (1+6\sqrt{n \eps})A_{OPT}^{-1}$.}
\begin{proof}
Since $A$ is $(1-\eps)$-approximate maximizer to (\ref{sdp}) we have
that
\[
\det(A)^{\frac{1}{n}} \geq (1-\eps) \det(A_{OPT})^{\frac{1}{n}} \Rightarrow \det(A) \geq (1-n\eps) \det(A_{OPT})
\]
We begin by proving by proving $A \succeq (1-3\sqrt{n \eps})A_{OPT}$. Now note that
\[
A \succeq (1-3\sqrt{n \eps})A_{OPT} ~~ \Leftrightarrow ~~ A_{OPT}^{-\frac{1}{2}} A A_{OPT}^{-\frac{1}{2}} \succeq (1-3\sqrt{n \eps})I_n
\]
Hence letting $B = A_{OPT}^{-\frac{1}{2}} A A_{OPT}^{-\frac{1}{2}}$, it suffices to show that $B \succeq (1-3\sqrt{n \eps})I_n$. From here,
we note that $1 \geq \det(B) = \det(A) / \det(A_{OPT}) \geq (1-n\eps)$. Now from Equation \eqref{eq:apr-sdp-1}, we have that
\[
\tr(B) = \tr(A_{OPT}^{-\frac{1}{2}} A A_{OPT}^{-\frac{1}{2}}) = \tr(A_{OPT}^{-1} A) \leq n
\]
Let $\sigma_1,\dots,\sigma_n$ denote the eigen values of $B$ in non-increasing order. We first note that $\sigma_n \leq 1$ since otherwise
\[
\det(B) = \prod_{i=1}^n \sigma_i \geq \sigma_n^n > 1
\]
a contradiction. Furthermore, since $B \succ 0$, we have that $0 < \sigma_n \leq 1$.
So we may write $\sigma_n = 1-\eps_0$, for $1 > \eps_0 \geq 0$.
Now since $\sum_{i=1}^n \sigma_i = \tr(B) \le n$,
by the arithmetic mean - geometric mean inequality we have that
\[
\det(B) = \sigma_n \prod_{i=1}^{n-1} \sigma_i = (1-\eps_0) \prod_{i=1}^{n-1} \sigma_i \leq
(1-\eps_0)\left(\frac{\sum_{i=1}^{n-1}\sigma_i}{n-1}\right)^{n-1} \le (1-\eps_0)(1+\frac{\eps_0}{n-1})^{n-1}
\]
Using the inequality $1+x \leq e^x \leq 1+x+\frac{e-1}{2}x^2$ for $x \in [-1,1]$, we get that
\begin{align*}
(1-\eps_0)(1+\frac{\eps_0}{n-1})^{n-1} &\leq (1-\eps_0)e^{\eps_0} \leq (1-\eps_0)(1+\eps_0+ \frac{e-1}{2} \eps_0^2) \\
                                       &= 1 - \frac{3-e}{2} \eps_0^2 - \frac{e-1}{2} \eps_0^3 \leq 1 - \frac{3-e}{2} \eps_0^2
\end{align*}
From this we get that
\[
1 - \frac{3-e}{2} \eps_0^2 \geq \det(B) \geq (1-n\eps)  ~~\Rightarrow~~ \eps_0 \leq \sqrt{\frac{2}{3-e} n \eps} \leq 3 \sqrt{n \eps}
\]
Therefore $\sigma_n = 1-\eps_0 \geq 1- 3 \sqrt{n \eps} \Rightarrow B \succeq (1-3\sqrt{n \eps})I_n \Rightarrow A \succeq (1-3\sqrt{n \eps})A_{OPT}$ as needed.
From here we get that
\[
A^{-1} \preceq \left(\frac{1}{1-3\sqrt{n \eps}}\right)A_{OPT}^{-1} \preceq (1+6\sqrt{n \eps})A_{OPT}^{-1}
\]
for $\eps \leq 1/36n$, proving the claim.
\end{proof}

Now take $T \in \R^{n \times n}$ satisfying $\alpha(T) \leq 1$. By the first claim, we note that $\|T\|_F \leq n \alpha(T) \leq n$. Now by
Equation \eqref{eq:apr-sdp-1}, we have that
\[
\tr(A^{-1}T) = \tr(A^{-1}_{OPT} T) + \tr((A^{-1}-A^{-1}_{OPT}) T) \leq n + \|A^{-1}-A^{-1}_{OPT}\|_F \|T\|_F \leq n + n \|A^{-1}-A^{-1}_{OPT}\|_F
\]
We bound the second term using the previous claim.
Since $A^{-1} \preceq (1+6\sqrt{n \eps})A^{-1}_{OPT}$, we have that $A^{-1}-A^{-1}_{OPT} \preceq 6\sqrt{n \eps}A^{-1}_{OPT}$, and hence
\[
\|A^{-1}-A^{-1}_{OPT}\|_F \leq \sqrt{n}\|A^{-1}-A^{-1}_{OPT}\|_2 \leq 6 n \sqrt{\eps} \|A^{-1}_{OPT}\|_2 \leq 6 n^2 \sqrt{\eps}
\]
Using this bound, we get
\[
\tr(A^{-1}T) \leq n + 6 n^3 \sqrt{\eps} = n(1 + 6n^2\sqrt{\eps})
\]
for any $T \in \R^{n \times n}$ satisfying $\alpha(T) \leq 1$. Thus we get that
$\alpha^*(A^{-1}) \leq n\left(1+6n^2\sqrt{\eps}\right)$. Together with the constraint $\alpha(A) \leq 1$,
the conclusion of the lemma follows.
\end{proof}


\begin{theorem}\label{thm:approx-Lewis}
Let $A$ be a $(1-\eps)$-approximate optimizer to the convex program (\ref{sdp}) for $\eps \le 1/(36n^4)$. Then
\[
\ell_K(A)\ell_{K^*}(A^{-1}) \le Cn \log^{\frac{3}{2}} d_{BM}(K,B_2^n).
\]
for an absolute constant $C > 0$.
\end{theorem}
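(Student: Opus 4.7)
The plan is to combine the three previously established tools in a single chain of inequalities: Lemma \ref{lem:approx-sdp} controls the product of the discrete $\tilde{\ell}$-quantities by $2n$; Lemma \ref{lem:ell-approx} upgrades each discrete quantity to its continuous ($\ell$ or $\ell^*$) counterpart at the cost of a $\sqrt{1+\log d_{BM}(K,B_2^n)}$ factor; and Lemma \ref{lem:K-convexity-bound} converts the matrix dual $\ell_K^*$ into the body dual $\ell_{K^*}$ at the cost of another $(1+\log d_{BM}(K,B_2^n))$ factor. Multiplying these bounds gives the claim with $Cn(1+\log d_{BM}(K,B_2^n))^{3/2}$, which is the intended meaning of the stated $\log^{3/2} d_{BM}$.

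More precisely, I would first invoke Lemma \ref{lem:approx-sdp}, whose hypothesis $\eps \le 1/(36n^4)$ is exactly what is assumed, to obtain $\tilde{\ell}_K(A)\,\tilde{\ell}_K^*(A^{-1}) \le 2n$. Next, I would apply the upper inequality of Lemma \ref{lem:ell-approx} to $A$ to get
\[
\ell_K(A) \le c_1 \sqrt{1+\log d_{BM}(K,B_2^n)}\,\tilde{\ell}_K(A),
\]
and apply the dual upper inequality (second line of Lemma \ref{lem:ell-approx}) to $A^{-1}$ to get
\[
\ell_K^*(A^{-1}) \le \sqrt{\pi/2}\,\tilde{\ell}_K^*(A^{-1}).
\]

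The only genuinely non-trivial manipulation is passing from $\ell_K^*(A^{-1})$ (the matrix-norm dual evaluated at $A^{-1}$) to $\ell_{K^*}(A^{-1})$ (the $\ell$-norm of the polar body). This is exactly the content of Lemma \ref{lem:K-convexity-bound}, provided one uses that the optimizer $A$ of the convex program (\ref{sdp}) is positive semidefinite, so $A^{-1}$ is symmetric and $(A^{-1})^T = A^{-1}$. Applying Lemma \ref{lem:K-convexity-bound} with $A^{-1}$ in place of $A$ then yields
\[
\ell_{K^*}(A^{-1}) \le 4(1+\log d_{BM}(K,B_2^n))\,\ell_K^*(A^{-1}).
\]

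Putting the three bounds together,
\[
\ell_K(A)\,\ell_{K^*}(A^{-1}) \le 4c_1\sqrt{\pi/2}\,(1+\log d_{BM}(K,B_2^n))^{3/2}\,\tilde{\ell}_K(A)\,\tilde{\ell}_K^*(A^{-1}) \le Cn\,(1+\log d_{BM}(K,B_2^n))^{3/2},
\]
which is the desired estimate. The main ``obstacle'' is really just conceptual bookkeeping: tracking which $\ell$-quantity is the matrix dual and which is the body dual, and invoking the symmetry of the PSD optimizer to justify the use of Lemma \ref{lem:K-convexity-bound}; once the identifications are lined up, the result falls out by multiplication of the three inequalities.
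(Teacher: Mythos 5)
Your proposal is correct and follows essentially the same route as the paper: Lemma \ref{lem:approx-sdp} for the bound $\tilde{\ell}_K(A)\tilde{\ell}_K^*(A^{-1}) \le 2n$, Lemma \ref{lem:ell-approx} to pass to $\ell_K(A)\ell_K^*(A^{-1})$ at the cost of a $\sqrt{1+\log d_{BM}}$ factor, and Lemma \ref{lem:K-convexity-bound} together with the symmetry of the PSD feasible matrix (so $A^{-T}=A^{-1}$) to convert $\ell_K^*(A^{-1})$ into $\ell_{K^*}(A^{-1})$ with an extra $(1+\log d_{BM})$ factor. The bookkeeping of the matrix-dual versus body-dual quantities and the symmetry justification are exactly the points the paper's proof makes, so nothing is missing.
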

\begin{proof}
Using Lemma \ref{lem:approx-sdp}, we have that
\[
\tilde{\ell}_K(A)\tilde{\ell}_K^*(A^{-1}) \le 2n.
\]
Next we use the approximation property (Lemma \ref{lem:ell-approx}) of $\tilde{\ell}_K$ to derive that
\[
\ell_K(A)\ell_K^*(A^{-1}) \le C n \sqrt{\log d_{BM}(K,B_2^n))}.
\]
Finally, noting that $A^{-T} = A^{-1}$ (by symmetry of $A$), we apply Lemma \ref{lem:K-convexity-bound} to infer that
\[
\ell_{K^*}(A^{-1}) \le C \ell_{K}^*(A^{-1}) \log d_{BM}(K,B_2^n),
\]
which completes the proof.
\end{proof}

Next we turn to proving that the algorithm produces an M-ellipsoid. While the analysis follows the existence proof to
a large extent, we need to handle the various approximations incurred.

To aid in the analysis of Algorithm \ref{fig:algo} on input $K \subseteq \R^n$, we make some additional definitions.
Let $a_i = \log^{(i)} n$ and $T =\log^* n$.
Let $K_1,\dots,K_T$ and $A_1,\dots,A_T$ denote the sequence of bodies and transformations generated by the algorithm.
Set
$K^{out}_1 = K^{in}_1 = K$, and for $1 \leq i \leq T-1$ define
\[
K^{in}_{i+1} = {\rm conv}\set{K^{in}_i, r_{in}^i A_iB_2^n} \quad K^{out}_{i+1} = K^{out}_i \cap r_{out}^i A_iB_2^n
\]
where $r_{in}^i, r_{out}^i$ are defined as $r_{in}, r_{out}$ in the $i$'th iteration of the main loop in Algorithm \ref{fig:algo}.

By construction, we have the relations
\[
K \subseteq K^{in}_1 \subseteq \dots \subseteq K^{in}_T,
\quad \quad K \supseteq K^{out}_1 \supseteq \dots \supseteq K^{out}_T,
\quad \quad K^{out}_i \subseteq K_i \subseteq K^{in}_i ~~\forall i \in [T]
\]

The proof of the main theorem will be based on the following inductive lemmas which quantify the properties of the sequences of bodies defined above.

\begin{lemma}\label{lem:it-banach-mazur}
$\forall i \in [T]$, we have that $d_{BM}(K_i,B_2^n) \leq C (\log^{(i-1)}n)^{\frac{7}{2}}$.
\end{lemma}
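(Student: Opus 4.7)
The plan is to prove the lemma by induction on $i$, with the main tool being the containment $r_{in}^i A_iB_2^n \subseteq K_{i+1} \subseteq r_{out}^i A_iB_2^n$ implied directly by the construction of $K_{i+1}$.

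For the base case $i=1$, we use that the algorithm preprocesses $K$ to be in approximate John position, so $B_2^n \subseteq K_1 \subseteq nB_2^n$, giving $d_{BM}(K_1,B_2^n) \leq n$. With the convention $\log^{(0)} n = n$, this is comfortably within $C(\log^{(0)} n)^{7/2} = Cn^{7/2}$ for any $C \geq 1$.

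For the inductive step, assume $d_{BM}(K_i,B_2^n) \leq C(\log^{(i-1)} n)^{7/2}$. First I would check that $r_{in}^i \leq r_{out}^i$ in every iteration (this follows because $\tilde\ell_{K_i}(A_i)\tilde\ell_{K_i^*}(A_i^{-1}) \geq n$, which itself follows from Lemma~\ref{lem:ell-containment} applied to $A_iB_2^n$ and the fact that any sandwiching of $K_i$ between homothets of an ellipsoid must have ratio at least $1$), so that both $K_i\cap r_{out}^i A_iB_2^n \subseteq r_{out}^i A_iB_2^n$ and $r_{in}^i A_iB_2^n \subseteq r_{out}^i A_iB_2^n$. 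Since the convex hull of subsets of $r_{out}^i A_iB_2^n$ is contained in $r_{out}^i A_iB_2^n$, we get the two-sided sandwiching above. Since $A_iB_2^n$ is an ellipsoid, $d_{BM}(A_iB_2^n,B_2^n)=1$, so
\[
d_{BM}(K_{i+1},B_2^n) \;\leq\; d_{BM}(K_{i+1},A_iB_2^n) \;\leq\; \frac{r_{out}^i}{r_{in}^i} \;=\; (\log^{(i)} n)^2 \cdot \frac{\tilde\ell_{K_i}(A_i)\,\tilde\ell_{K_i^*}(A_i^{-1})}{n}.
\]

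The remaining step is to bound the ratio $\tilde\ell_{K_i}(A_i)\tilde\ell_{K_i^*}(A_i^{-1})/n$. I would pass from $\tilde\ell$ to $\ell$ using the easy direction of Lemma~\ref{lem:ell-approx} (which costs only absolute constants), then apply Theorem~\ref{thm:approx-Lewis} to $A_i$, yielding
\[
\tilde\ell_{K_i}(A_i)\tilde\ell_{K_i^*}(A_i^{-1}) \;\leq\; \tfrac{\pi}{2}\,\ell_{K_i}(A_i)\ell_{K_i^*}(A_i^{-1}) \;\leq\; C'\, n\,(\log d_{BM}(K_i,B_2^n))^{3/2}.
\]
Plugging in the inductive hypothesis gives $\log d_{BM}(K_i,B_2^n) \leq \log C + \tfrac{7}{2}\log\log^{(i-1)} n = O(\log^{(i)} n)$, so
\[
d_{BM}(K_{i+1},B_2^n) \;\leq\; C''\,(\log^{(i)} n)^{2}\cdot (\log^{(i)} n)^{3/2} \;=\; C''\,(\log^{(i)} n)^{7/2},
\]
closing the induction provided we choose $C$ in the lemma statement at least as large as $C''$.

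The main technical point to be careful about is the relation between $\tilde\ell_{K_i^*}(A_i^{-1})$ (which appears in the algorithm) and the dual matrix norm $\tilde\ell_{K_i}^*(A_i^{-1})$ (which appears in Lemma~\ref{lem:approx-sdp}): they differ by a factor of $O(1+\log d_{BM}(K_i,B_2^n))$ via Lemma~\ref{lem:K-convexity-bound} combined with Lemma~\ref{lem:ell-approx}. Routing the bound through Theorem~\ref{thm:approx-Lewis} (which already packages this transition) is cleaner and produces precisely the $(\log d_{BM})^{3/2}$ that generates the exponent $7/2$ after the $(\log^{(i)} n)^2$ prefactor. Keeping track of constants so that the inductive $C$ absorbs all of them is bookkeeping rather than a genuine obstacle.
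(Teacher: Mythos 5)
Your proof is correct and follows essentially the same route as the paper's: sandwich $K_{i+1}$ between $r_{in}^i A_iB_2^n$ and $r_{out}^i A_iB_2^n$, bound $r_{out}^i/r_{in}^i$ by passing from $\tilde{\ell}$ to $\ell$ via the easy direction of Lemma \ref{lem:ell-approx} and invoking Theorem \ref{thm:approx-Lewis}, then close the recurrence using the inductive hypothesis, exactly as in the paper. The only quibble is your side justification of $\tilde{\ell}_{K_i}(A_i)\tilde{\ell}_{K_i^*}(A_i^{-1}) \ge n$: the sandwiching-ratio argument from Lemma \ref{lem:ell-containment} only yields a lower bound of $1$, and the correct route is the pointwise duality bound $\|A_iu\|_{K_i}\|A_i^{-1}u\|_{K_i^*} \ge \langle A_iu, A_i^{-1}u\rangle = \|u\|_2^2 = n$ (using symmetry of $A_i$) plus Jensen; in any case this claim is dispensable, since if $r_{in}^i > r_{out}^i$ then $K_{i+1} = r_{in}^i A_iB_2^n$ is an ellipsoid and the stated bound holds trivially.
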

\begin{proof}
For the base case, we have that $d_{BM}(K_1,B_2^n) \leq \sqrt{n} \leq C n^{\frac{7}{2}}$ for any constant $C \geq 1$.

For the general case, by construction of $K_{i+1}$ we have that
\[
r^i_{in}A_iB_2^n \subseteq K_{i+1} \subseteq r^i_{out}A_iB_2^n.
\]
Therefore,
\begin{eqnarray*}
d_{BM}(K_{i+1},B_2^n) &\le& r^i_{out}/r^i_{in} \\
&=& a_i^2 \tilde{\ell}_{K_i^*}(A_i^{-1}) \tilde{\ell}_{K_i}(A_i) /n \\
&\le& C_1 a_i^2 \ell_{K_i^*}(A^{-1})\ell_{K_i}(A_i)/n \quad \left(\text{by Lemma \ref{lem:ell-approx}}\right) \\
&\le& C_1 (\log^{(i)} n)^2 (\log d_{BM}(K_{i},B_2^n))^\frac{3}{2}. \quad \left(\text{by Lemma \ref{thm:approx-Lewis}}\right)
\end{eqnarray*}
Using the fact that $\log^{(i)}(n) \geq 1$, $\forall i \in [T-1]$, a direct computation shows that the above recurrence equation implies the existence of a constant
$C > 1$ (depending only on $C_1$) such that the stated bound on $d_{BM}(K_{i+1},B_2^n)$ holds.
\end{proof}

\begin{lemma}\label{lem:iteration}
For $i \in [T-1]$, we have that
\[
\max \left\{ \frac{\vol(K^{out}_i)}{\vol(K^{out}_{i+1})}, \frac{\vol(K^{in}_{i+1})}{\vol(K^{in}_i)}\right\} \le e^{Cn/\log^{(i)} n}
\]
\end{lemma}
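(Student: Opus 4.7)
The plan is to handle the two ratios in parallel: for the outer one I bound $\vol(K^{out}_i)/\vol(K^{out}_{i+1})$ by a covering number via Lemma \ref{lem:covering1}, and for the inner one I apply Lemma \ref{lem:volume1} (Milman) to the convex hull. In both cases the resulting covering number will be estimated by a (dual) Sudakov inequality. The radii $r^i_{in}, r^i_{out}$ in the algorithm were calibrated so that after substitution the $\tilde\ell$ factors in the numerator cancel with those in $r$, leaving an exponent whose denominator is the radius-squared, namely $a_i^2/n$ with $a_i=\log^{(i)}n$. The only residual looseness comes from passing between $\ell$ and $\tilde\ell$ via Lemma \ref{lem:ell-approx}, which costs $1+\log d_{BM}(K_i,B_2^n)$; by Lemma \ref{lem:it-banach-mazur} this is $O(a_i)$, so the final exponent simplifies to $O(n/a_i)$.

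For the outer ratio, Lemma \ref{lem:covering1} gives $\vol(K^{out}_i)/\vol(K^{out}_i\cap r^i_{out}A_iB_2^n) \le N(K^{out}_i, r^i_{out}A_iB_2^n)$, and the monotonicity $K^{out}_i\subseteq K_i$ from the excerpt yields $\le N(K_i, r^i_{out}A_iB_2^n)$. Applying Sudakov (Lemma \ref{lem:Sudakov}), and using $A_i^{-T}=A_i^{-1}$ because $A_i\succeq 0$ in the program (\ref{sdp}), bounds this by $\exp\bigl(C\ell_{K_i^*}(A_i^{-1})^2/(r^i_{out})^2\bigr)$. Substituting $(r^i_{out})^2=a_i^2\tilde\ell_{K_i^*}(A_i^{-1})^2/n$ and applying Lemma \ref{lem:ell-approx} to the body $K_i^*$ (noting $d_{BM}(K_i^*,B_2^n)=d_{BM}(K_i,B_2^n)$), the exponent becomes $O\bigl(n(1+\log d_{BM}(K_i,B_2^n))/a_i^2\bigr)$, which is $O(n/a_i)$ by Lemma \ref{lem:it-banach-mazur} (using $\log a_{i-1}=a_i$).

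For the inner ratio, I invoke Lemma \ref{lem:volume1} with $K=K^{in}_i$ and $D=r^i_{in}A_iB_2^n$. The required containment $D\subseteq\alpha K^{in}_i$ follows from Lemma \ref{lem:ell-containment} and $K_i\subseteq K^{in}_i$, giving $\alpha\le r^i_{in}\ell_{K_i}(A_i)$; by Lemma \ref{lem:ell-approx} and the Banach-Mazur bound this is $O(\sqrt{n/a_i})=O(\sqrt n)$. For the covering number, $N(D,K^{in}_i)\le N(D,K_i)$ by $K_i\subseteq K^{in}_i$, and the dual Sudakov inequality (Lemma \ref{lem:DualSudakov}) gives $\exp\bigl(C(r^i_{in})^2\ell_{K_i}(A_i)^2\bigr)$; after plugging in $(r^i_{in})^2=n/(a_i^2\tilde\ell_{K_i}(A_i)^2)$ and the Banach-Mazur bound exactly as before, this is $e^{O(n/a_i)}$. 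Multiplying by the $4\alpha n = \poly(n)$ prefactor from Lemma \ref{lem:volume1} and absorbing it into the exponent (valid since $a_i\le n$) gives the stated bound.

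The principal obstacle is keeping the Banach-Mazur factor under control. The relation $\log d_{BM}(K_i,B_2^n) = O(a_i)$ from Lemma \ref{lem:it-banach-mazur} must eat exactly one of the two $a_i$'s appearing in the denominator $a_i^2$ of the Sudakov bound; any slacker bound on $d_{BM}$ would collapse the advertised $O(n/a_i)$ into $O(n)$. This is also the reason the algorithm uses the tapered schedule $a_i=\log^{(i)}n$ rather than a fixed constant: only with $a_i\to\infty$ slowly enough that $\log a_{i-1}=a_i$ does the Banach-Mazur penalty stay subcritical as the iteration proceeds.
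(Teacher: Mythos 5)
Your proof is correct and follows essentially the same route as the paper's: Lemma \ref{lem:covering1} plus the Sudakov inequality (with the radius $r^i_{out}$ cancelling the $\tilde\ell$ factor) for the outer ratio, and Lemma \ref{lem:volume1} with the containment from Lemma \ref{lem:ell-containment} plus dual Sudakov for the inner ratio, with Lemmas \ref{lem:ell-approx} and \ref{lem:it-banach-mazur} converting the Banach--Mazur penalty into one factor of $a_i$ exactly as the paper does. The only (harmless) imprecision is your parenthetical ``valid since $a_i\le n$'' for absorbing the $\poly(n)$ prefactor: what is actually needed is $\log n = O(n/a_i)$, which holds since $a_i\le\log n$.
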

\begin{proof}

By Lemma \ref{lem:covering1}, the fact that $K^{out}_i \subseteq K_i$, Lemma \ref{lem:Sudakov}, Lemma \ref{lem:ell-approx} and Lemma \ref{lem:it-banach-mazur}, we have that
\begin{align*}
\frac{\vol(K^{out}_i)}{\vol(K^{out}_{i+1})} &\le N(K^{out}_i, r^i_{out}A_iB_2^n)
                                            \le N(K_i, r^i_{out}A_iB_2^n) \\
                                            &\le e^{C (\ell_{K_i^*}(A_i^{-1})/r_{out}^i)^2}
                                            = e^{C n \ell_{K_i^*}(A_i^{-1})^2/(a_i \tilde{\ell}_{K_i^*}(A^{-1}))^2} \\
                                            &\le e^{C n \log(d_{BM}(K_i^*,B_2^n))/a_i^2}
                                            \le e^{C n/\log^{(i)}n}
\end{align*}
By Lemma \ref{lem:ell-containment}, \ref{lem:ell-approx} and \ref{lem:it-banach-mazur}, we see that
\[
r^i_{in}A_iB_2^n \subseteq r^i_{in}\ell_{K^{in}_i}(A_i) K^{in}_i \subseteq r^i_{in} \ell_{K_i}(A_i) K^{in}_i \subseteq C_1 \sqrt{n} K^{in}_i.
\]
Next by Lemma \ref{lem:volume1}, the fact that $K_i \subseteq K^{in}_i$, Lemma \ref{lem:DualSudakov}, Lemma \ref{lem:ell-approx} and Lemma \ref{lem:it-banach-mazur}, we have that
\begin{align*}
\frac{\vol(K^{in}_{i+1})}{\vol(K^{in}_i)} &\le C_14n^{\frac{3}{2}} N(r^i_{in}A_iB_2^n, K^{in}_i)
\le C_1n^{\frac{3}{2}} N(r^i_{in}A_iB_2^n, K_i) \\
&\le C_1n^{\frac{3}{2}} e^{C(\ell_{K_i}(A_i)r_{in}^i)^2} = C_1n^{\frac{3}{2}} e^{Cn \ell_{K_i}(A_i)^2/(a_i \tilde{\ell}_{K_i}(A_i))^2} \\
                                                &\le C_1n^{\frac{3}{2}} e^{Cn \log(d_{BM}(K_i,B_2^n))/a_i^2} \le C_1n^{\frac{3}{2}} e^{C n(1/\log^{(i)}n)}
                                                 \le e^{C n/\log^{(i)}n}
\end{align*}

\end{proof}

We are now ready to complete the proof.

\begin{proof}(of Theorem \ref{thm:det-M-ellipsoid}.)
By construction of $K_T$, we note that
\[
r^{T-1}_{in} A_{T-1} B_2^n \subseteq K_T \subseteq r^{T-1}_{out} A_{T-1} B_2^n
\]
where by Lemma \ref{lem:it-banach-mazur} we have that $r^{T-1}_{out}/r^{T-1}_{in} = O(1)$. Therefore the returned ellipsoid $E = \frac{\sqrt{n}}{\tilde{\ell}_{K_{T-1}(A_{T-1})}} A_{T-1}B_2^n$ (last line of Algorithm \ref{fig:algo}) satisfies that
\[
\frac{1}{C} E \subseteq K_T \subseteq C E
\]
for an absolute constant $C \geq 1$. Next by Lemma \ref{lem:covering1}, we have that
\[
N(K,E), N(E,K) \leq 3^n \frac{\max \set{ \vol(K), \vol(E) } }{ \vol(K \cap E) }
\]
Now we see that
\[
K \subseteq K^{in}_T \subseteq C K^{in}_T \quad \quad E \subseteq C K_T \subseteq C K^{in}_T,
\]
and that
\[
K \supseteq \frac{1}{C} K^{out}_T \quad \quad E \supseteq \frac{1}{C} K_T \supseteq \frac{1}{C} K^{out}_T \text{.}
\]
Therefore,
\[
\frac{ \max \set{ \vol(K), \vol(E) } }{\vol(K \cap E)} \leq C^{2n} \frac{\vol(K^{in}_T)}{\vol(K^{out}_T)} \text{.}
\]
Finally, by Lemma \ref{lem:iteration} we have that
\begin{align*}
\frac{\vol(K^{in}_T)}{\vol(K^{out}_T)} = \prod_{i=1}^{T-1} \frac{\vol(K^{in}_{i+1})}{\vol(K^{in}_{i})} ~ \frac{\vol(K^{out}_i)}{\vol(K^{out}_{i+1})}
                                    \leq \prod_{i=1}^{T-1} e^{2Cn/\log^{(i)}n} = 2^{O(n)} \text{.}
\end{align*}
Combining the above inequalities yields the desired guarantee on the algorithm.
The time complexity is $2^{O(n)}$, dominated by the time to evaluate the $\tilde{\ell}_K$-norm. The space is polynomial
since all we need to maintain are efficient oracles for the successive bodies $K_i$, which can be done space-efficiently
for the operations of intersection and convex hull used in the algorithm \cite{GLS}.
\end{proof}

\section{An asymptotically optimal volume algorithm}\label{sec:volume}

As noted in the introduction, the result of Theorem \ref{thm:det-vol}, a deterministic $2^{O(n)}$-approximation for volume,
 follows directly from Theorem \ref{thm:det-M-ellipsoid}.
In this section, we show how to modify our M-ellipsoid algorithm (based on Milman's iteration) to match this lower bound
algorithmically.

In the M-ellipsoid algorithm of the previous section, we construct a series of convex bodies $K_0 = K, K_1, \ldots, K_T$
such that the covering numbers $N(K,K_T)$ and $N(K_T,K)$ are bounded by $2^{O(n)}$ and the final body $K_T$ has
$d_{BM}(K_T, B_2^n) < C$ for some constant $C$. Our modification will construct a similar sequence of bodies, but rather
than bounding covering numbers, we will ensure that
\[
e^{-C\eps n} \vol(K) \le \vol(K_T) \le e^{C\eps n} \vol(K)
\]
and
\[
d_{BM}(K_T, B_2^n) \le C \frac{\ln(1/\eps)^{\frac{5}{2}}}{\eps^2}.
\]
Then we approximate the volume of $K_T$ by finding an approximate $\ell$-ellipsoid $E$ for it, and covering it with
translations of a maximal parallelopiped that fits in $\eps E$. Since this covering will consist of disjoint
parallelopipeds, and their union will be contained in $K_T+\eps E \subseteq (1+\eps)K_T$,
we get the desired approximation. Here is the precise algorithm.

\begin{figure}[ht]
\fbox{\parbox{\textwidth}{

\noindent
{\bf Volume($K,\eps$).}
\begin{enumerate}
\item Let $K_1 = K$ and $T = \log^* n$
\item For $i=1 \ldots T-1$,
\begin{enumerate}
\item Compute an approximate $\ell$-ellipsoid of $K_i$ using the convex program (\ref{sdp}) to get an approximately
optimal transformation $A_i$ (the corresponding ellipsoid is $A_iB_2^n$).
\item Set
\[
r_{in} =  \frac{\eps \sqrt{n}}{\sqrt{\ln(1/\eps)} C\log^{(i)}(n) \tilde{\ell}_{K_i}(A_i)} \mbox{ and }
r_{out} = \frac{C\sqrt{\ln(1/\eps)} \log^{(i)}(n)\tilde{\ell}_{K_i^*}(A_i^*)}{\eps \sqrt{n}}.
\]
\item Define
\[
K_{i+1} = \mbox{conv}\{K_i \cap r_{out} A_iB_2^n, r_{in} A_iB_2^n \}.
\]
\end{enumerate}
\item Compute the ellipsoid $E = r_{in} A_{T-1} B_2^n$ and
a maximum volume parallelopiped $P$ inscribed in $E$ (via the principal components of $A_{T-1}$).
\item Cover $K_T$ with disjoint copies of $\eps P$. Output $k \vol(P)$ where $k$ is the number of copies used.
\end{enumerate}
}}
\caption{Deterministic Volume Algorithm}
\label{fig:vol-algo}
\end{figure}

\begin{proof}[Proof of Theorem \ref{thm:optimal-vol}]
Let $a_i = \log^{(i)} n$. As in Lemma \ref{lem:it-banach-mazur}, we bound the Banach Mazur via the following recurrence
\begin{align*}
d_{BM}(K_{i+1},B_2^n) \le r^i_{out}/r^i_{in} \le C \frac{\ln(1/\eps)}{\eps^2} (\log^{(i)} (n))^2 (\log d_{BM}(K_{i},B_2^n))^{\frac{3}{2}}.
\end{align*}
From the above recurrence a direct computation reveals that for $\forall~ i \in [T]$,
\[
d_{BM}(K_i,B_2^n) \leq C \frac{\ln(1/\eps)^{5/2}}{\eps^2} (\log^{(i-1)}(n))^{\frac{7}{2}}
\]

We now show that the volumes of the $K_i$ bodies changes very slowly. This will enable us to conclude that the volume of $K_T$ is
very close to the volume of $K$.

By Lemmas \ref{lem:ell-containment}, \ref{lem:ell-approx} and the above bound on $d_{BM}(K_i,B_2^n)$, we have that
\[
r^i_{in}A_iB_2^n \subseteq r^i_{in}\ell_{K_i}(A_i) K_i \subseteq C \frac{\eps \sqrt{n \log d_{BM}(K_i, B_2^n)} }{\sqrt{\ln(1/\eps)}\log^{(i)}(n)} K_i \subseteq C \eps \sqrt{n} K_i
\]
and that
\[
r^i_{out}A_iB_2^n = C \frac{\sqrt{\ln(1/\eps)}\log^{(i)}(n)\tilde{\ell}_{K^*}(A^{-1})}{\eps \sqrt{n}} A_iB_2^n \supseteq C \frac{\ell_{K^*}(A^{-1})}{\eps \sqrt{n}} A_iB_2^n
\supseteq C \frac{1}{\eps \sqrt{n}} K_i \text{.}
\]

\noindent Therefore if $\eps \leq C/\sqrt{n}$, then $K_{i+1} = \conv \set{r^i_{in}A_iB_2^n, K_i \cap r^i_{out}A_iB_2^n} = K_i$.
Since this holds for all $i \in [T-1]$, we get that $K_T = K$ and hence $\vol(K_T) = \vol(K)$.
\vspace{1em}

\noindent Now assume that $\eps \geq C/\sqrt{n}$. Then for $i \in [T-1]$, using Lemmas \ref{lem:covering1} and \ref{lem:Sudakov}, we have,

\begin{eqnarray*}
\vol(K_{i+1}) &\ge& \vol(K_i \cap r_{out} B_2^n)\\
&\ge& \frac{\vol(K_i)}{N(K_i, r_{out}^iB_2^n)} \\
&\ge& e^{-C (\ell_{K_i^*}(A_i^{-1})/r_{out}^i)^2} \vol(K_i)\\
&\ge& e^{-C (\eps^2/\ln(1/\eps)) n\log d_{BM}(K_i,B_2^n)/a_i^2} \vol(K_i) \\
&\ge& e^{-C n \eps / \log^{(i)}(n)} \vol(K_i).
\end{eqnarray*}

From the above, we get that
\[
\frac{\vol(K_T)}{\vol(K)} = \prod_{i=1}^{T-1} \frac{\vol(K_{i+1})}{\vol(K_i)} \geq \prod_{i=1}^{T-1} e^{-C n \eps / \log^{(i)}(n)} \geq e^{-C n \eps}
\]

Next via Lemma \ref{lem:volume1}, the above containment, and Lemma \ref{lem:Sudakov}, we have,
\begin{eqnarray*}
\vol(K_{i+1}) &\le& \vol(\conv{\{K_i, r_{in}B_2^n\}})\\
&\le& C (\eps \sqrt{n}) n N(r_{in}B_2^n, K_i) \vol(K_i)\\
&\le& C (\eps n^{\frac{3}{2}}) e^{C(r_{in}^i\ell_K(A_i))^2} \vol(K_i) \\
&\le& C (\eps n^{\frac{3}{2}}) e^{C(\eps^2/\ln(1/\eps)) n \log d_{BM}(K_i,B_2^n)/a_i^2} \vol(K_i) \\
&\le& C (\eps n^{\frac{3}{2}}) e^{C n\eps / \log^{(i)}(n)} \vol(K_i).
\end{eqnarray*}

From this, we get that
\[
\frac{\vol(K_T)}{\vol(K)} = \prod_{i=1}^{T-1} \frac{\vol(K_{i+1})}{\vol(K_i)} \leq (C\eps n^{\frac{3}{2}})^{\log^*(n)} \prod_{i=1}^{T-1} e^{C n \eps / \log^{(i)}(n)} \leq e^{C n \eps},
\]
where the above holds as long as $\eps = \Omega(\frac{\log n \log^* n}{n})$ (which we have by assumption).

Combining the above inequalities, we get
\[
e^{-C \eps n} \vol(K) \le \vol(K_T) \le e^{C \eps n} \vol(K).
\]

Let $E$ denote the final ellipsoid computed by the algorithm, and let $P$ denote a maximimum volume inscribed
parallelipiped of $E$. By construction of $E$ and $K_T$, we have that $E \subseteq K_T \subseteq
C\frac{\ln(1/\eps)^{5/2}}{\eps^2}E$. Therefore the covering produced is contained in $K_T + \eps P \subseteq K_T + \eps
E \subseteq (1+\eps) K_T$.  Hence the estimate found by the algorithm lies between $\vol(K_T)$ and $\vol((1+\eps)K_T)
= (1+\eps)^n \vol(K_T)$. Thus the overall approximation factor is bounded by $e^{C n \eps}$
as desired.

Next we bound the size of the covering found by the algorithm in Step 4.
Noting that $\vol(E) = 2^{O(n)} \vol(P)$, the size of the covering is bounded by
\[
\frac{\vol(K_T + \eps P)}{\vol(P)} \leq (1+\eps)^n \frac{\vol(K_T)}{\vol(P)} \leq C^n (1+\eps)^n \frac{\vol(K_T)}{\vol(E)} \leq
C^n (1+\eps)^n (\ln(1/\eps)^{5/2}/\eps^2)^n = (1/\eps)^{O(n)} \text{.}
\]
Finally, we describe the enumeration procedure that will ensure that the time bound is $(1/\eps)^{O(n)}$ and
the space used is polynomial in $n$. The number of parallelopipeds enumerated could be as high as $(1/\eps)^{O(n)}$.
However, we do not need to store all the copies that intersect $K$, we only need the number.
To do this using polynomial space, we start with a parallelopiped inside $K$
designated as the {\em root} and fix an order on its axes. For every other parallelopiped
in the axis-aligned tiling, designate its {\em parent} to be an adjacent node closer to the root in Manhattan distance
along the axes of the parallelopiped (i.e., the usual $L_1$ distance for the centers
of the parallelopipeds after transforming parallelopipeds to cuboids), breaking ties using the ordering on coordinates. This ensures that a depth-first traversal
of the tree defined by this structure takes time linear in the number of nodes in the traversal and space
linear in the dimension.  This is a special case
of a more general space-efficient traversal technique studied by Avis and Fukuda \cite{AvisF93}.
\end{proof}

\section{Applications to lattice problems}\label{sec:SVP}

We now consider the consequences of our deterministic M-ellipsoid algorithm for lattice problems, in particular, the shortest vector problem (SVP) and closest vector problem (CVP) in any norm. Dadush et al \cite{DPV-SVP-11} who gave reductions from SVP and CVP any norm to the case of Euclidean norm, or more specifically, to the problem of enumerating all lattice points in an ellipsoid.
This special case was solved in $2^{O(n)}$ time for both SVP and CVP by Micciancio and Voulgaris  \cite{DBLP:conf/stoc/MicciancioV10}, using an approach specific to the Euclidean norm.
The key idea of the reduction was to cover a suitable scaling of the convex body $K$ defining the norm by $2^{O(n)}$ translations of its M-ellipsoid. The scaling $s > 0$ is such that $sK \cap L \neq \emptyset$ and
$\frac{s}{2}K \cap L = \emptyset$. At this scaling, a simple volume argument shows that any translation of $sK$ contains no more than $2^{O(n)}$ lattice points. To enumerate lattice points in $sK$, the idea is to
cover $sK$ using translations of an ellipsoid $E$, and then to enumerate lattice points in each of these ellipsoids.
The number of points in any of the ellipsoids is easily bounded by the number of translations of $sK$ required to cover $E$ times the number of points in any translation of $sK$, i.e., $N(sK, E) N(E,sK)$;
this is precisely the number that is bounded by $2^{O(n)}$ for an M-ellipsoid. Thus, the complexity of the
algorithm is $2^{O(n)}$ {\em plus} the complexity of computing an M-ellipsoid.
The approach for CVP is similar. We now state the reduction precisely.
For a lattice $L$ and convex body $K$ in $\R^n$, let $G(K,L)$ be the
largest number of lattice points contained in any translate of $K$, i.e.,
\begin{equation}
  \label{eq:G-K-L}
  G(K,L) = \max_{x \in \R^{n}} \abs{(K+x) \cap L}.
\end{equation}
The main result of
\cite{DPV-SVP-11} can be stated as follows.
\begin{theorem}\cite{DPV-SVP-11}
  Given any convex body $K \subseteq \R^n$ along with an ellipsoid $E$ of $K$ and any $n$-dimensional
  lattice $L \subseteq \R^n$, the set $K \cap L$ can be computed in
  deterministic time $G(K,L) \cdot N(K,E)N(E,K) \cdot 2^{O(n)}$.
\end{theorem}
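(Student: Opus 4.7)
The plan is to reduce enumeration of $K \cap L$ to enumeration of lattice points in ellipsoids, for which we assume a deterministic $2^{O(n)}$-time subroutine is available (the Micciancio--Voulgaris algorithm, which handles this Euclidean special case). The reduction proceeds in two stages: first, produce a covering of $K$ by translates of $E$ of cardinality $N(K,E) \cdot 2^{O(n)}$; second, enumerate lattice points in each translate, filter each through the membership oracle for $K$, and take the union.

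For the covering stage, I would build a set $\{t_1, \ldots, t_M\} \subseteq K$ satisfying $K \subseteq \bigcup_i (t_i + E)$ by maintaining a maximal $\tfrac{1}{2}E$-separated subset of $K$ (so that $t_i - t_j \notin \tfrac{1}{2}E$ for $i \neq j$). Maximality forces the union of the $t_i + E$ to cover $K$, and disjointness of the sets $t_i + \tfrac{1}{4}E$ (all contained in $K + \tfrac{1}{4}E$) gives
\[
M \cdot \vol(\tfrac{1}{4}E) \;\leq\; \vol(K + \tfrac{1}{4}E) \;\leq\; 2^{O(n)} \cdot N(K,E) \cdot \vol(E),
\]
so $M \leq 2^{O(n)} N(K,E)$. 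Algorithmically, one transforms to the coordinate system in which $E$ becomes a ball, places a parallelepiped grid whose fundamental cell is inscribed in $\tfrac{1}{2}E$, and enumerates the grid points whose cell meets $K$ by depth-first search from a seed point found inside a known inscribed ball of $K$; the bound on $M$ ensures termination within budget.

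For the enumeration stage, invoke the ellipsoidal subroutine on each $E + t_i$. The key counting observation is that a single $E + t_i$ contains at most $N(E,K) \cdot G(K,L)$ lattice points: writing $E \subseteq \bigcup_{j=1}^{N(E,K)} (K + s_j)$, we obtain $(E + t_i) \cap L \subseteq \bigcup_j (K + s_j + t_i) \cap L$, and each term has cardinality at most $G(K,L)$ by definition of $G$. Each subroutine call therefore runs in $2^{O(n)} \cdot N(E,K) \cdot G(K,L)$ time; multiplying by $M$ translates and performing the per-point membership test yields the total $G(K,L) \cdot N(K,E) \cdot N(E,K) \cdot 2^{O(n)}$ claimed.

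The main obstacle will be oracle-efficient construction of the covering: a naive axis-aligned sweep over a bounding box of $K$ can enumerate cells whose count scales with the circumradius-to-inradius ratio of $E$ inside $K$, which is unrelated to $N(K,E)$ and typically too large. Controlling the count by $N(K,E) \cdot 2^{O(n)}$ really requires the connectivity-guided traversal above, together with the volumetric packing argument, so that only grid cells actually intersecting $K$ ever get enumerated and each such cell corresponds (up to $2^{O(n)}$ multiplicity) to a distinct $\tfrac{1}{4}E$-packing element inside $K + \tfrac{1}{4}E$.
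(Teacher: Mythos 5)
Your proposal is correct and follows essentially the same route as the paper (which cites this theorem from \cite{DPV-SVP-11} and sketches exactly this reduction): cover $K$ by $2^{O(n)}N(K,E)$ translates of $E$, enumerate lattice points in each translate with the deterministic Euclidean subroutine of Micciancio--Voulgaris, and bound the points per ellipsoid by $N(E,K)\cdot G(K,L)$. Your grid-based, depth-first construction of the covering is also the same device the paper itself uses elsewhere (the parallelepiped tiling with Avis--Fukuda-style traversal in the volume algorithm), so there is nothing essentially new or missing here.
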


For an M-ellipsoid $E$ of $K$, the numbers $N(K,E)$ and $N(E,K)$ are both bounded by $2^{O(n)}$.

From Theorem \ref{thm:det-M-ellipsoid}, we obtain a simple corollary.
\begin{corollary}
  Given any convex body $K \subseteq \R^n$ and any $n$-dimensional
  lattice $L \subseteq \R^n$, the set $K \cap L$ can be computed deterministically in time $G(K,L) \cdot 2^{O(n)}$.
\end{corollary}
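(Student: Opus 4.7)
The plan is to combine the two ingredients already in hand: the deterministic M-ellipsoid construction of Theorem \ref{thm:det-M-ellipsoid} and the reduction of \cite{DPV-SVP-11} recalled just above. First I would run the algorithm of Theorem \ref{thm:det-M-ellipsoid} on $K$ (presented by its membership oracle) to deterministically produce, in time $2^{O(n)}$ and polynomial space, an ellipsoid $E \subseteq \R^n$ with
\[
N(K,E) \cdot N(E,K) \le 2^{O(n)}.
\]

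Next I would feed the pair $(K,E)$ together with the lattice $L$ into the deterministic enumeration procedure of \cite{DPV-SVP-11}, whose running time is guaranteed to be
\[
G(K,L) \cdot N(K,E) \cdot N(E,K) \cdot 2^{O(n)}.
\]
Since $E$ is an M-ellipsoid, substituting the covering bound above collapses the $N(K,E) N(E,K)$ factor into $2^{O(n)}$, and the two $2^{O(n)}$ terms combine to a single $2^{O(n)}$, yielding the claimed bound $G(K,L) \cdot 2^{O(n)}$. The overhead of computing $E$ itself is only $2^{O(n)}$ and is absorbed into the same factor.

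There is essentially no obstacle here: the corollary is a direct consequence of the two cited theorems, and the only thing to verify is that the output of our Theorem \ref{thm:det-M-ellipsoid} is a valid input for the reduction of \cite{DPV-SVP-11}, which it is by inspection, since the reduction only requires an ellipsoid $E$ satisfying the covering bound (it does not care whether $E$ was produced randomly or deterministically). The one point worth remarking on is that, as observed at the start of Section 3, the symmetry assumption used in the M-ellipsoid construction is handled by passing to the difference body $K - K$, whose M-ellipsoid remains an M-ellipsoid of $K$ up to a $2^{O(n)}$ loss in the covering numbers; this loss is again swallowed by the $2^{O(n)}$ factor in the final bound, so the corollary holds for arbitrary convex $K$ as stated.
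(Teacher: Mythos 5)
Your proposal is correct and is exactly the paper's argument: run the deterministic M-ellipsoid algorithm of Theorem \ref{thm:det-M-ellipsoid} and plug the resulting $E$, with $N(K,E)N(E,K)\le 2^{O(n)}$, into the enumeration reduction of \cite{DPV-SVP-11}, absorbing the $2^{O(n)}$ construction cost into the final bound. The remark about reducing to the symmetric case via $K-K$ is also consistent with how the paper handles general convex bodies.
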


For SVP in any norm, as showin in \cite{DPV-SVP-11}, a simple packing argument shows that $G(\lambda_1 K,L) = 2^{O(n)}$, where $\lambda_1 = \inf_L \|x\|_K$,
the length of the shortest nonzero vector in $L$, thus implying Theorem \ref{thm:det-SVP}. Similarly, for CVP in any norm
$G(\gamma \lambda_1 K,L) = (2+\gamma)^{O(n)}$, where $\gamma$ is the ratio between the minimum distance of the query point to the lattice and the length of the shortest nonzero vector; this gives Theorem \ref{thm:det-CVP}. Both these bounds on $G(K,L)$ can be found in \cite{DPV-SVP-11}.

It remains open to solve CVP deterministically in time $2^{O(n)}$ with no assumptions on the minimum distance. Even the special case of CVP under the $L_\infty$ norm is open.\\

\noindent {\bf Acknowledgements.} We are very grateful to Assaf Naor and Grigoris Paouris for helpful pointers and discussions, 
specifically for showing us the proofs of Lemmas \ref{easy-ell-approx} and \ref{lem:ell-approx}.

\bibliographystyle{pnas}
\bibliography{lattices,acg,cg}

\end{document}
\section{Appendix: Pisier's iteration}

Here we discuss a simplified version of Milman's iteration as given by Pisier. The convergence of this simplified version is slower than the original version but it still takes only about $O(\log\log n)$ steps and its proof is simpler. Although we do not use it in our algorithm, we discuss it briefly.

For a centrally symmetric convex body $K$ and an ellipsoid $E$, we consider the following covering parameter:
\[
M(K,E) = \left(\frac{\vol(K + E)}{\vol(K\cap E)}\frac{\vol(K^* + E^*)}{\vol(K^*\cap E^*)}\right)^{1/n}.
\]
From the basic properties of covering numbers, we have the following relationship.
\begin{lemma}\label{lem:M-ell}
\[
M(K,E) \le \frac{C}{n} \ell_K(E)\ell_{K^*}(E^*).
\]
\end{lemma}
If we assume that $K$ is in John position, then $d_{BM}(K,B_2^n) \le \sqrt{n}$ and the
Lewis ellipsoid of $K$, scaled as in Corollary \ref{cor:ell-ell*} so that $\ell_K(E) = \sqrt{n}$, gives us a starting $E$ for which $M(K,E) \le C\log n$.
The iteration will reduce this bound to $O(1)$.

The parameter $M(K,E)$ can be expressed as follows:
\[
M(K,E) = M_1 M_2
\]
where
\[
M_1 = \left(\frac{\vol(K+E)}{\vol(K)}\frac{\vol(K^*)}{\vol(K^*\cap E^*)}\right)^{1/n}
\mbox{ and }
M_2 = \left(\frac{\vol(K^* + E^*)}{\vol(K^*)}\frac{\vol(K)}{\vol(K \cap E)}\right)^{1/n}.
\]

We are now ready to define the iteration. Given a centrally-symmetric convex body $K$ and an ellipsoid $E$, and covering parameter $M(K,E) = M_1M_2$,

\noindent
If $M_1 = \min M_1, M_2$, compute the ellipsoid $E_1$ given by Corollary \ref{cor:ell-ell*} for the convex body $K+E$; else compute $E_1*$ for the convex body $K^*+E^*$.

Pisier (proof of Theorem 7.1 in \cite{Pis89}) shows the following holds at the end of such an iteration:
\begin{align}\label{iteration}
M(K,E_1) \le C \sqrt{M(K,E)}(1+\log M(K,E))^2.
\end{align}

Thus, after $O(\log\log n)$ iterations, we get an ellipsoid with $M(K,E) = O(1)$, i.e., an M-ellipsoid.

Pisier's iteration suggests the following algorithm.

\begin{figure}[h]
\fbox{\parbox{\textwidth}{

\noindent
{\bf Reduce $(K)$.}
\begin{enumerate}
\item Compute an $\ell$-ellipsoid of $K$; scale the solution $A$ by a factor of $\sqrt{n}/\ell_K(A)$;
let $E$ be the corresponding ellipsoid.
\item
\begin{enumerate}
\item Reduce $(K+E)$.
\item Reduce $(K^*+E^*)$.
\end{enumerate}
\item Output the first ellipsoid that achieves a $2^{Cn}$ covering number for $K$.
\end{enumerate}
}}
\caption{The M-Ellipsoid Algorithm}
\label{fig:M-ellipsoid}
\end{figure}

\end{document}